\documentclass[12pt]{article}

\usepackage[utf8]{inputenc}
\usepackage[english]{babel}

\usepackage[a4paper, total={6.3in, 8.9in}]{geometry}

\usepackage{tocloft}
\usepackage{mathptmx} 

\usepackage{verbatim}
\usepackage{enumitem}
\usepackage{array,multirow}
\usepackage{enumitem}
\usepackage{amssymb,amsfonts,amsmath,amsthm}
\usepackage{aliascnt} 
\usepackage{url}
\usepackage{tikz}
\usepackage{bbm}
\usepackage{accents}
\usepackage{thmtools}

\usetikzlibrary{arrows.meta}
\usetikzlibrary{calc}

\usepackage{xcolor}
\usepackage{ctable}
\usepackage{tikz}
\usetikzlibrary{matrix,arrows}
\usetikzlibrary{decorations.markings}

\definecolor{citec}{HTML}{324FDA} 
\definecolor{linkc}{HTML}{A0111A}
\definecolor{urlc}{HTML}{b55c87}
\usepackage[colorlinks,
    citecolor=citec,
    linkcolor=linkc,
    urlcolor=urlc,
    bookmarks = true,
    breaklinks,
]{hyperref}
\usepackage[nameinlink, noabbrev, capitalize]{cleveref}

\theoremstyle{plain}
\newtheorem{theorem}{Theorem}[section]

\newtheorem{lemma}[theorem]{Lemma}

\newtheorem{proposition}[theorem]{Proposition}
\newtheorem{claim}[theorem]{Claim}
\newtheorem{definition}[theorem]{Definition}

\newtheorem{remark}[theorem]{Remark}

\numberwithin{equation}{section}

\newcommand{\deff}{\triangleq}

\newcommand{\eps}{\varepsilon}

\def\poly{{\mathrm{poly}}}

\newcommand{\wt}{\widetilde}

\newcommand{\FF}{{\mathbb{F}}}

\newcommand{\EE}{{\mathbb{E}}}

\newcommand{\norm}[1]{\left\lVert #1 \right\rVert}

\newcommand{\EEE}{\mathcal{E}}

\begin{document}

\title{A Forward--Backward Weight Analysis of INW for Permutation Branching Programs}

\date{}

    \author{
  Gil Cohen\thanks{Tel Aviv University. \texttt{gil@tauex.tau.ac.il}. Supported by ERC starting grant 949499 and by the Israel Science Foundation grant 2989/24.  	
  }\and
  Dean Doron\thanks{Ben Gurion University. \texttt{deand@bgu.ac.il}. Supported in part by NSF-BSF grant 2022644.}
  \and
  Noam Goldgraber\thanks{Ben Gurion University and Tel Aviv University. \texttt{goldgrab@post.bgu.ac.il}. Supported by NSF-BSF grant 2022644.}
}

\maketitle

\begin{abstract}

We construct an $\eps$-error PRG for permutation read-once branching programs of length $n$ and width $w$ with seed length
\[
O\left((\log w+\log(1/\eps))\cdot \log n\right).
\]
This gives an exponential improvement in the dependence on $w$ compared with the constructions of De (CCC 2011) and Steinke (ECCC 2012). Compared with the work of Braverman, Rao, Raz, and Yehudayoff (FOCS 2010; SICOMP 2014), which applies more generally to regular branching programs and already achieves the optimal dependence on $w$, our result improves the dependence on the length $n$, attaining the optimal logarithmic dependence.

The generator itself is the classical INW PRG of Impagliazzo, Nisan, and Wigderson (STOC 1994).  We show that, for permutation branching programs, the INW generator can be instantiated with expanders whose degrees are polynomial in \(w\) and \(1/\eps\) and, crucially, independent of \(n\). To prove this, we analyze error propagation using program-dependent seminorms tailored to the branching program at hand. These seminorms build on the weight function introduced by Braverman et al. The key point is that, when measured in these adapted seminorms, the error \emph{does not accumulate} throughout the recursion.

Since our analysis relies only on the spectral expansion of the underlying expanders, our seed length tightly matches the recent lower bound for spectral analyses of the INW generator due to Hoza, Pyne, and Vadhan (Algorithmica 2024).
\end{abstract}

\thispagestyle{empty}

\newpage

\pagenumbering{gobble}

\setcounter{tocdepth}{2}
\tableofcontents

 \newpage
\pagenumbering{arabic}
\setcounter{page}{3}

\section{Introduction}

The $\mathbf{BPL}$ vs.\ $\mathbf{L}$ problem asks whether every probabilistic algorithm can be fully derandomized with only a constant factor blowup in space.
The way that a space-bounded algorithm acts on its random bits can be modeled by a \emph{read-once branching program} (BP, for short; see \cref{def:bp}). A natural and well-studied approach to derandomize $\mathbf{BPL}$ is to come up with a \emph{pseudorandom generator} (PRG) that fools BPs with logarithmic seed length.   

The seminal work of Nisan \cite{Nisan92} gave a PRG with seed length $O(\log^{2}n)$ for BPs over $n$ variables and width $w=\poly(n)$ -- the setting that is required for derandomizing $\mathbf{BPL}$ via standard derandomization. Impagliazzo, Nisan, and Wigderson \cite{INW94} subsequently introduced a second construction based on expander graphs. Although giving similar parameters to those in \cite{Nisan92}, the INW PRG turned
out to be more flexible, and will be the starting point for much of the discussion below.

Before we continue, let us briefly recall the structure of the INW generator. The construction is recursive: a generator for programs of length $m$ is used to construct a generator for programs of length $2m$. The generator is applied to the first half of the program, and an expander graph is then used to ``recycle'' its seed into a correlated seed for the second half. Since the recursion has depth $\log n$, the final seed length is $O(\log d \cdot \log n)$, where $d$ is the degree of the expanders. In the standard INW analysis, these expanders can be chosen to have degree $d$ which is polynomial in $n$, $w$, and $1/\eps$, yielding seed length
\[
    O\!\left((\log w+\log n+\log(1/\eps))\cdot \log n\right).
\]
In particular, in the $\mathbf{BPL}$ regime, this yields seed length $O(\log^2 n)$. 

Unfortunately, even for width $w = 4$, no construction is known with better than quadratic dependence on $\log n$. The only known exceptions are widths $2$ and $3$. Saks and Zuckerman showed that small-bias spaces fool width-$2$ BPs \cite{SZ95} (see also \cite{BDVY13}). The width-$3$ case, which is considerably more challenging, was treated by Meka, Reingold, and Tal \cite{MRT18}, who constructed a PRG with near-optimal dependence on $n$.

On the bright side, substantial progress has been made in constructing PRGs for more restricted classes of branching programs, a direction that has been studied extensively and successfully, with applications beyond the original $\mathbf{BPL}$ setting. A prominent example is the class of \emph{permutation} BPs, which has received considerable attention. These are branching programs in which, at every layer, both the $0$-transition and the $1$-transition are permutations of the state space. Note that such a computation is reversible. 

For this class, previous work achieved the optimal logarithmic dependence on the length $n$, and in fact, often using a substantially sharper analyses of the INW generator.
Kouck\'y, Nimbhorkar, and Pudl\'ak \cite{koucky2011} analyzed the INW generator for group products, a structured subclass of permutation branching programs. For general permutation BPs, their work achieved the optimal dependence on the length $n$, albeit with exponential dependence on the width $w$; more precisely, they obtained seed length
\[
    O\!\left((w!+\log(1/\eps))\cdot \log n\right).
\]
De \cite{De11} subsequently achieved polynomial dependence on $w$, obtaining seed length
\[
    O\!\left((w^8+\log(1/\eps))\cdot \log n\right).
\]
De's analysis uses representation-theoretic ideas. Steinke \cite{Steinke12} later gave a more elementary analysis, avoiding any reference to groups or representation theory, though still relying on rather intricate combinatorial arguments. This improved the width dependence to
\[
    O\!\left((w^4\log w+\log(1/\eps))\cdot \log n\right).
\]

This line of work shows that, for permutation BPs, the expanders underlying the INW generator can be chosen to have degree independent of the program length $n$, though exponential in the width $w$. Consequently, for constant width and constant error, the resulting seed length is  $O(\log n)$.

A more general class of branching programs is that of \emph{regular} BPs, in which every state has  in-degree $2$. Braverman, Rao, Raz, and Yehudayoff \cite{BRRY} gave a refined analysis of the INW generator for regular BPs, obtaining seed length
\[
    O\!\left((\log w+\log\log n+\log(1/\eps))\cdot \log n\right).
\]
BRRY showed that, for regular BPs, it suffices to use expanders whose degree is polynomial only in the recursion depth, namely in $\log n$, rather than in the program's length $n$; this is the source of the $\log\log n$ term in their seed length. Their argument is substantially simpler than the analyses of De and Steinke~\cite{De11,Steinke12}, which achieve optimal dependence on $n$ for permutation BPs, while also allowing the expander degree to depend only polynomially on $w$.

Nevertheless, the BRRY bound remains suboptimal in its dependence on the length. Within the INW framework, achieving seed length \(O(\log n)\) requires expander degrees that are completely independent of \(n\). Consequently, even a slowly growing loss across the recursion is prohibitive, and one must show that the relevant error does not accumulate at all.

\subsection{Our Result}

Despite a resurgence over the past five years in the study of PRGs and weighted PRGs \footnote{A weighted PRG is a relaxation of the standard notion of a PRG, introduced in~\cite{BCG}, in which each seed is assigned a real weight. These weights may be negative and need not be bounded in absolute value.}  for permutation BPs and their unbounded-width variants, as discussed in \cref{sec:other related work}, there had been no improvement for more than a decade on the original problem of constructing a PRG for permutation BPs, following the works of \cite{De11,Steinke12,BRRY}. In this work, we obtain the first such improvement by giving a new analysis of the INW generator. We show that it fools permutation BPs with seed length
\(
O\left((\log w+\log(1/\eps))\cdot \log n\right).
\)

Compared with the analyses of De~\cite{De11} and Steinke~\cite{Steinke12}, which also achieve the optimal dependence on \(n\), our result improves the dependence on \(w\) exponentially, attaining the optimal dependence on this parameter. Compared with the BRRY analysis~\cite{BRRY}, which applies more generally to regular BPs, our result achieves the optimal logarithmic dependence on the length \(n\).

\begin{theorem}[Main result; informal]\label{thm:informal-main}
    There exists an explicit PRG $G:\{0,1\}^s \rightarrow \{0,1\}^n$ that $\eps$-fools length-$n$ width-$w$ permutation branching programs, with seed length 
    \[
        s = O\!\left((\log w+\log(1/\eps))\cdot \log n\right).
    \]
    In particular, $G$ is the INW generator, properly instantiated.
\end{theorem}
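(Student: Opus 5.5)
The plan is to instantiate the INW generator with $\log n$ levels of recursion. At each level we use a $\lambda$-spectral expander of degree $d=\poly(w/\eps)$, and we prove that the total error stays at most $\eps$. Recall the recursion: $G_0$ outputs one bit, and
\[
G_{i+1}(x,e)=G_i(x)\circ G_i(\Gamma_i(x,e)),
\]
where $\Gamma_i$ is the neighbor function of a $d$-regular expander on $\{0,1\}^{s_i}$. The seed length is then $s=1+\log n\cdot\log d=O((\log w+\log(1/\eps))\log n)$. So everything reduces to an error analysis in which $\lambda=\poly(\eps/w)$ is independent of $n$.

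\textbf{Step 1: operator recursion.} For each segment $[a,b)$ of the program, write $W_{a,b}$ for the true $w\times w$ transition matrix averaged over uniform inputs. Write $\wt W_{a,b}$ for the matrix averaged over the outputs of the generator at the appropriate level. The standard expander mixing step gives
\[
\wt W_{a,c}=\wt W_{a,b}\wt W_{b,c}+E_{a,c},
\]
where $E_{a,c}$ is controlled by $\lambda$. More precisely, $E_{a,c}=\EE[\ldots]$ is $\lambda$ times a bilinear form in the centered parts of the one-seed-indexed matrices. Unrolling over the binary tree expresses $\wt W_{0,n}-W_{0,n}$ as a sum over all tree nodes of terms $\wt W_{\text{prefix}}\,E_{\text{node}}\,\wt W_{\text{suffix}}$. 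Naively there are $n$ such terms, each of size $\lambda$, and this is exactly the source of the $n$-dependence we must avoid.

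\textbf{Step 2: program-dependent seminorms.} Following the BRRY weight function, I will define for a vector $v\in\mathbb{R}^w$ and a segment $[a,b)$ a forward seminorm
\[
\norm{v}_{a,b}^2=\sum_{\text{layers }j\in[a,b)}\ \sum_{\text{states }u}\ \text{(disagreement of } vW_{a,j}\text{ across the two edges at }u).
\]
This is essentially the energy that $v$ loses by averaging through the segment. Because permutation matrices are orthogonal, this satisfies a telescoping identity
\[
\norm{v}_2^2-\norm{vW_{a,b}}_2^2=c\,\norm{v}_{a,b}^2,
\]
which is the key use of the permutation property. I define a backward analogue by applying the same construction to the transposed, reversed program, which is again a permutation program. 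The crucial lemma I aim to prove is that each local error term satisfies
\[
|\langle u E_{a,c},v\rangle|\le \lambda\,\norm{u}^{\mathrm{fwd}}_{a,c}\,\norm{v}^{\mathrm{bwd}}_{a,c}.
\]
Here the seminorms measure only the nonconstant parts of the seed-indexed matrices, and Cauchy--Schwarz on the expander's nontrivial eigenspace gives the bound.

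\textbf{Step 3: summing without accumulation.} Fix a starting distribution $u$ and a test vector $v$. Each node's term is at most $\lambda\,\norm{u\wt W_{\text{prefix}}}^{\mathrm{fwd}}_{\text{node}}\,\norm{\wt W_{\text{suffix}}v}^{\mathrm{bwd}}_{\text{node}}$. Apply Cauchy--Schwarz over the nodes of each level, and then sum over levels. Since each level's forward seminorms telescope to at most $\norm{u}_2^2\le1$, and likewise for the backward seminorms, each level contributes at most $O(\lambda)$ times a factor that I must argue is summable across the $\log n$ levels rather than additive. This is where the main obstacle lies. A naive bound gives $\lambda\log n$, which costs $\log\log n$ exactly as in BRRY. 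To avoid it, I would weight the seminorm of a node by its depth, using that a node's seminorm equals the sum of its two children's seminorms plus a cross term. This lets the telescoping run across all levels simultaneously, so the whole tree sum is bounded by one global energy budget $\norm{u}_2^2\cdot\norm{v}_2^2$ times $\poly(w)\lambda$.

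A second subtlety is that the prefixes use $\wt W$ rather than $W$. I would handle this by induction, showing that pseudorandom prefixes also lose energy at most as fast as the true ones, up to a $(1+O(\lambda))$ factor per level. Choosing $\lambda=\eps/\poly(w)$ then gives total error $\le\eps$ in $\ell_1$ over final states, which yields the theorem.
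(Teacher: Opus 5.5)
Your proposal has a genuine gap at the step you flag as the main obstacle, and neither tool you propose there is enough. First, the exact $\ell_2$ energy telescoping only tells you that the per-layer energies sum to $E_f\le\norm{u}_2^2$ and $E_b\le\norm{v}_2^2$, and that alone cannot beat $\log n$. If the energies are spread evenly over the layers, the node bounds at each level sum to $\Theta(\lambda\sqrt{E_fE_b})$, so the tree total is $\Theta(\lambda\sqrt{E_fE_b}\log n)$. Any argument that sums these node bounds against the energy budget alone therefore loses $\log n$.

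Second, your local lemma uses seminorms over the whole node $[a,c)$. Mixing actually gives forward variation over the left half times backward variation over the right half, and that split is essential: with whole-node factors, a single nontrivial layer is charged at each of its $\log n$ ancestors.

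The missing ingredient is control of the \emph{unsquared} per-layer quantities. This is BRRY's weight lemma (\cref{lemma:weight-bound}), applied to the program and to its reverse: $F_I(p)=\sum_t\norm{p_t^\top(P_{t,0}-P_{t,1})}_1\le 2w\norm{p}_1$ and $B_I(q)\le 2w\norm{q}_1$ for all real $p,q$, independently of $|I|$. The node error is then $\lambda F_L(p)B_R(q)=\lambda\sum_{t\in L,\,u\in R}f_tb_u$. Each pair of layers $t<u$ is charged exactly once, at its lowest common ancestor, and the total is $\lambda\sum_{t<u}f_tb_u\le\lambda F_I(p)B_I(q)=O(\lambda w^2)\norm{p}_1\norm{q}_1$, with no $\log n$.

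Your treatment of pseudorandom prefixes would also reintroduce accumulation. Expander mixing bounds the node error by variances under the children's \emph{pseudorandom} seed distributions, e.g.\ $\norm{u}_2^2-\norm{u\wt W_{a,b}}_2^2$, not by true-walk energies, so your local lemma does not follow as stated. Moreover, mixing yields additive errors of the form $\lambda\sqrt{\mathrm{Var}}\cdot\norm{y}_2$, not errors proportional to the variance. Even granting your fix, a multiplicative $(1+O(\lambda))$ loss per level compounds to $n^{O(\lambda)}$. That stays bounded only when $\lambda\lesssim 1/\log n$, which is exactly the $\log\log n$ term you are trying to remove.

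The paper avoids this in two ways. It bounds the deviation of \emph{every} path, $\norm{P_L(x)^\top p-T_L^\top p}_1\le\frac12F_L(p)$ (\cref{lemma:random-vs-deterministic-walk}), so the local estimate does not depend on the seed distribution. It also compares $F_R(\wt T_L^\top p)$ with $F_R(T_L^\top p)$ \emph{additively}:
\begin{itemize}
\item $\delta=\wt T_L^\top p-T_L^\top p$ has coordinate sum zero, so $\norm{\delta}_1=2\langle\delta,r\rangle$ for the sign indicator $r$;
\item the induction hypothesis on $L$ then gives $\norm{\delta}_1\le 4cw^2\lambda F_L(p)$;
\item \cref{lemma:weight-bound} gives $F_R(\delta)\le 2w\norm{\delta}_1\le F_L(p)/c$ once $\lambda\le 1/(8c^2w^3)$.
\end{itemize}
Write $A=F_L(p)$, $B=F_R(T_L^\top p)$, $C=B_L(T_Rq)$, $D=B_R(q)$. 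The excess is charged to the left half's forward weight $A$, and the three error terms total $2\lambda AD+c\lambda AC+c\lambda BD$. This is at most $c\lambda(A+B)(C+D)=c\lambda F_I(p)B_I(q)$ whenever $c\ge 2$, so nothing compounds across the levels.
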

For the complete formal statement, see \cref{thm:main-formal-result}.

Because our proof is a spectral analysis, the resulting seed length tightly matches the lower bound for spectral analyses established by Hoza, Pyne, and Vadhan~\cite{HWP24}, discussed below in \cref{sec:other related work}. Thus, we answer affirmatively the first open problem posed in their work.

\subsection{Other Related Work}\label{sec:other related work}

As noted above, the study of PRGs and weighted PRGs for permutation BPs, restricted subclasses, and related variants has seen significant activity in recent years. In this section, we survey the results most relevant to our work.

For restricted families of permutation BPs, improved seed lengths are known via generators that are not based on the INW construction. In a paper predating the definition of permutation BPs, Lovett, Reingold, Trevisan, and Vadhan~\cite{LRTV09} constructed a PRG for the special case of abelian groups with seed length
\[
    \widetilde{O}\left(\log(nw/\eps)\cdot \log w\right).
\]
More recently, for \(p\)-groups of constant size, Lee and Viola~\cite{LV25} obtained the optimal seed length
\(
O(\log(n/\eps))
\)
via a reduction to the construction of PRGs for polynomials over \(\FF_p\).

In a different line of work, more closely related to ours in its techniques, Pyne and Vadhan~\cite{PV21} constructed a weighted PRG for permutation BPs with seed length
\[
    \widetilde{O}\!\left(\log(1/\eps)+\sqrt{\log(n/\eps)}\cdot \log n\right).
\]
Their construction improves upon the known PRG constructions in the low-error regime. Hoza, Pyne, and Vadhan~\cite{HPV21} constructed a PRG for unbounded-width permutation BPs with a single accepting state, having seed length
\[
    O\!\left((\log\log n+\log(1/\eps))\cdot \log n\right).
\]
More recently, Cheng and Wu~\cite{CW26} obtained a weighted PRG for unbounded-width permutation BPs with seed length
\[
    O\!\left(\log(1/\eps)+(\log\log n+\sqrt{\log(1/\eps)})\cdot \log n\right).
\]

Hoza, Pyne, and Vadhan~\cite{HWP24} studied the limitations of the INW generator and proved that any ``spectral analysis'' of it requires seed length
\[
    \Omega\left((\log w+\log(1/\eps))\cdot \log n\right),
\]
even for permutation BPs. Here, a spectral analysis is one that assumes only a suitable spectral gap for the underlying expanders, without exploiting any additional properties. This framework encompasses all previous analyses of the INW generator. As noted above, our proof is spectral in this sense, and consequently, our result is tight within this framework.

Interestingly, in the more general setting of permutation BPs over an alphabet of size $d$, \cite{HWP24} proved a lower bound of
\(
\Omega(\log n \cdot \log\log \min\{n,d\}).
\)
Our framework, however, relies crucially on the binary-alphabet setting. Standard reductions yield a PRG whose seed length has logarithmic dependence on $d$, and we leave closing the resulting gap as an open problem.


\subsection{Proof Idea}

A central ingredient in the BRRY analysis of the INW generator is the notion of the \emph{weight} of a branching program. This quantity measures, in an averaged sense, how much the individual layers can influence the final value of the computation. BRRY used this notion to sharpen the error analysis of INW: instead of accumulating error linearly over the $n$ layers of the program, the error is controlled in terms of the recursion depth, which is logarithmic in $n$, and the weight of the program. The second key step in their analysis is to prove that, for regular BPs, the total weight is bounded by $\poly(w)$ -- independent of the length of the program. Combining these two ingredients yields their improved seed length.

Our key idea is to combine the BRRY weight perspective with the reversibility available in permutation BPs. In addition to the usual BRRY weight, which we view as a \emph{backward} weight, we introduce a complementary \emph{forward} weight corresponding to the reversed program. This symmetric use of forward and backward weights is the main new ingredient in our analysis. It allows us to avoid paying for the  recursive depth altogether, thereby yielding the optimal dependence on $n$. 

Exploiting this symmetry requires the generator itself to preserve a compatible forward--backward structure. In our analysis, the seed-recycling step is not treated by an arbitrary sampler, but rather as the averaging operator of an expander, and can hence also recycle the seed when considering the inverse program. This feature is crucial for simultaneously controlling the forward and backward quantities that arise in the analysis. For this reason, our proof relies essentially on the expander-based structure of the INW construction. Interestingly, although for a seemingly different reason, a recent work by Hoza and Lv \cite{HL25} also relies heavily on the fact that the INW generator is based on expander graphs.


At a conceptual level, our proof can also be viewed as an error-propagation analysis tailored to the branching program at hand. Rather than controlling the error throughout the recursion using a single fixed norm, such as the spectral norm or an induced $\ell_1$ norm, as in most previous analyses, we associate with each subprogram its own seminorm, capturing the particular ``geometry'' of that subprogram. The main point is that, with respect to these adapted error measures, the error \emph{does not accumulate} over the recursion: it remains bounded by $O(\lambda)$, where $\lambda$ is the spectral expansion of the underlying expanders, provided $\lambda$ is taken to be sufficiently small, namely inverse-polynomial in $w$ and, crucially, independent of $n$.

While most prior work measures error using a fixed norm, several recent works instead consider program-dependent norms or seminorms \cite{MRS19,AKM20,PV21,HPV21,APP23,chen2023weighted} to give a better error reduction procedures---either algorithmically in a white-box setting or in the construction of weighted PRGs---or for controlling the error propagation in INW-inspired constructions throughout the recursion. We discuss the relationship between the seminorm used in these works and our seminorm, in greater detail in \cref{sec:variance-seminorm-comparison}, focusing on \cite{chen2023weighted} (that builds on the singular-value approximation \cite{APP23}).


Beyond its conceptual appeal, we believe that the seminorm formulation may
be useful for combining our argument with other work. Many analyses of
PRGs and related recursive constructions are phrased in
terms of norm bounds, and the forward--backward seminorm isolates the precise
norm-like quantity that our proof controls.

\paragraph{Organization.}
After the brief preliminaries in \cref{sec:prelim}, we prove our main result, \cref{thm:informal-main}, in \cref{sec:main}. The proof makes no explicit reference to the seminorms discussed above. We believe that presenting the argument in this form makes it more direct and easier to follow. In \cref{sec:seminorm}, we reinterpret the proof from the perspective of the forward--backward seminorms. Finally, in \cref{sec:refined}, we give a refined analysis of the proof of \cref{thm:informal-main}. This refinement may be useful in future work and also brings out a convolutional perspective on the argument.

\section{Preliminaries}\label{sec:prelim}

\subsection{Branching Programs}

We recall the relevant definitions of read-once branching programs, and PRGs that fool them. Those definitions are standard, and we refer the reader to \cite{Hoz24} for a more detailed discussion.

\begin{definition}[Branching programs (BPs)]\label{def:bp}
    Let $n,w$ be positive integers. A width-$w$ length-$n$ read-once branching program, denoted by $P$, is a directed graph on the set of vertices, called "states", which is arranged in layers $V = V_0 \cup \cdots \cup V_n$, each of size $w$.
    For every $t \in [n]$ and a vertex $v \in V_{t-1}$, there are two outgoing edges from $v$ to $V_{t}$, with labels 0 and 1. There is a ``start vertex'' $s \in V_0$, and a set of ``accept states'' $A \subseteq V_n$.
    A string $x \in \{0,1\}^n$ defines a walk through the program -- start at $s$, and at each step $t \in [n]$, walk through the outgoing edge which is labeled $x_t \in \{0,1\}$. This walk arrives at a final vertex $v \in V_n$ that is either ``accept'' ($v \in A$) or ``reject'' ($v \notin A$).
\end{definition}
It is useful to view each layer transition as a matrix. Identify each layer with $[w]$. For \(t\in[n]\) and \(b\in\{0,1\}\), let
\(P_{t,b}\in\{0,1\}^{w\times w}\) be the matrix whose \((i,j)\)-entry is \(1\)
iff the edge labeled \(b\) from the \(i\)-th vertex of \(V_{t-1}\) goes to the
\(j\)-th vertex of \(V_t\).
\begin{definition}[Permutation BP]
    A read-once branching program $P$ is called a permutation BP if, for all $t \in [n]$ and $b \in \{0,1\}$, the matrix $P_{t,b}$ is a permutation matrix.
\end{definition}
From now on, fix some width-$w$ length-$n$ BP $P$. 
We define
\[
T_t = \frac{P_{t,0}+P_{t,1}}{2},
\]
which is the random-walk operator associated with layer $t$.
We will frequently consider subprograms corresponding to intervals of the original BP. For an interval $I=\{a,\ldots,b-1\} \subseteq[n]$ and an input string $x\in\{0,1\}^{I}$, define
\(
    P_I(x)=P_{a,x_a}\cdots P_{b-1,x_{b-1}}.
\)
The corresponding uniform random-walk operator is
\begin{equation}\label{eq:real ti}
    T_I
    =\mathbb{E}_{x\in\{0,1\}^{I}}\!\left[ P_I(x) \right]
    =T_aT_{a+1}\cdots T_{b-1}.
\end{equation}

Let $A\subseteq V_n$ denote the set of accepting states in the $n$-th layer, and let $s\in V_0$ denote the starting state. The value of the program on input $x \in \{0,1\}^n$ is then given by
\(
    e_s^\top P_{[n]}(x) \mathbbm{1}_A,
\)
where $e_s\in\mathbb{R}^w$ is the standard basis vector corresponding to $s$, and $\mathbbm{1}_A\in\mathbb{R}^w$ is the indicator vector of $A$. For brevity, we may also denote this value by \(P(x)\). Similarly, the acceptance probability according to a truly uniform walk is given by $e_s^\top T_{[n]} \mathbbm{1}_A$.

A PRG for permutation BPs maps short uniformly random seeds to inputs while approximately preserving the acceptance probability of every program.
\begin{definition}
    A function $G:\{0,1\}^s \rightarrow \{0,1\}^n$ is an $\varepsilon$-PRG for width-$w$ length-$n$ permutation branching programs if for every such program $P$, we have
    \[
        \left | \mathbb{E}_{x \in \{0,1\}^n} [ P(x) ] - \mathbb{E}_{y \in \{0,1\}^s} [ P(G(y)) ] \right | \leq \varepsilon.
    \]
    In this case we say that $G$ $\varepsilon$-fools length-$n$ width-$w$ permutation BPs.
\end{definition}

\subsection{Expander Graphs}
As mentioned, the PRG we use is the INW generator, which is constructed from a family of expander graphs. We begin by recalling the definition of an expander.
\begin{definition}[Expander graph]
    Let $G = (V,E)$ be an undirected, $d$-regular graph, and let $A$ be its adjacency matrix. We say that the graph $G$ is a $\lambda$-expander if the spectral norm of the matrix $\frac1{d} \,A - J$ is at most $\lambda$, where $J$ is the normalized all-ones matrix.
\end{definition}
For the INW generator, we need the following construction of an explicit family of expander graphs.

\begin{theorem}[e.g., \cite{MRS19}, Theorem 3.3]\label{thm:expander we use}
    For every $n \in \mathbb{N}$ and $\lambda > 0$, there exists an explicit graph $H_n$ on $n$ vertices, which is a $D$-regular $\lambda$-expander with $D = (1/\lambda)^{\Theta(1)}$. Moreover, for any vertex $x$ and a neighbor label $y \in [D]$, the $y$-th neighbor of $x$, which is denoted by $H_n(x,y)$, can be computed in space $O(\log (nD))$.
\end{theorem}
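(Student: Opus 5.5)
The plan is to build a constant-degree, constant-expansion explicit graph on exactly $n$ vertices, and then drive the expansion down to $\lambda$ by graph powering. We may assume $\lambda<1$.

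\textbf{Step 1: a base expander on exactly $n$ vertices.} The goal is an explicit, undirected $d_0$-regular $\lambda_0$-expander $H^{(0)}_n$ on $n$ vertices, with absolute constants $d_0\in\mathbb{N}$ and $\lambda_0<1$.

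For $n$ of special forms, classical constructions give this directly. The Gabber--Galil/Margulis graphs on $\mathbb{Z}_m\times\mathbb{Z}_m$ are one example, and so are zig-zag constructions. Every such graph has a rotation map computable in space $O(\log n)$.

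For arbitrary $n$, I would pick a special size $N$ with $n\le N\le Cn$ and start from a base expander on $N$ vertices. I would then partition $[N]$ into $n$ blocks of sizes $\lceil N/n\rceil$ or $\lfloor N/n\rfloor$ and contract each block to a single vertex. Finally, I would add self-loops so that every vertex has the same degree $d_1=O(d_0)$.

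Contracting blocks of bounded size and padding with loops preserves edge expansion up to constant factors. Cheeger's inequality then turns this into a constant spectral gap, and loops also keep the smallest eigenvalue away from $-1$. So the result is a $\lambda_1$-expander with $\lambda_1<1$ constant. The neighbor function stays computable in $O(\log n)$ space, since block membership is just arithmetic on indices.

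This step is the main obstacle: the Cheeger step costs a square in the gap, and regularity must be maintained carefully. If it becomes messy, I would instead cite an explicit family of constant-degree expanders existing for every size, such as Alon's explicit expanders of every degree and size, or the construction underlying \cite{MRS19}.

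\textbf{Step 2: powering.} Let $M=\frac{1}{d_1}A$ be the normalized adjacency matrix of the base graph. $M$ is symmetric and satisfies $MJ=JM=J$ and $J^2=J$. Hence
\[
(M-J)^k=M^k-J,
\]
so $\|M^k-J\|\le\lambda_1^k$.

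The $k$-th power graph $H_n$ is the graph whose edges are the length-$k$ walks of the base graph. It is undirected and $D$-regular with $D=d_1^k$, since its normalized adjacency matrix $M^k$ is symmetric. Choosing $k=\lceil\log(1/\lambda)/\log(1/\lambda_1)\rceil$ gives a $\lambda$-expander with
\[
D=d_1^{O(\log(1/\lambda))}=(1/\lambda)^{\Theta(1)}.
\]

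\textbf{Step 3: space bound.} A label $y\in[D]$ is written as $(y_1,\dots,y_k)\in[d_1]^k$. The vertex $H_n(x,y)$ is computed by iterating the base neighbor function $k$ times. At any moment we only store the current vertex ($O(\log n)$ bits), the label $y$ ($O(\log D)$ bits), a loop counter, and the $O(\log n)$ workspace of one base call, which is reused across iterations. The total space is $O(\log n+\log D)=O(\log(nD))$, as claimed.
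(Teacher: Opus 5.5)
The paper does not prove this statement; it imports it as a black box from \cite{MRS19}. Your argument follows the standard route to such a result, and it is correct in outline. You take an explicit constant-degree expander on a nearby size such as $m^2$ and merge vertices in bounded-size blocks to land on exactly $n$ vertices. A Cheeger round-trip then recovers a constant spectral gap, and powering via $(M-J)^k=M^k-J$ drives the expansion down to $\lambda$. The space bound obtained by iterating the base neighbor function is also fine.

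\textbf{Fix in Step~1.} The loops you add only to equalize degrees may be absent at some or all vertices, for instance when $n \mid N$ and all blocks have the same size. So they do not by themselves keep the smallest eigenvalue away from $-1$, and Cheeger's inequality controls only the second largest eigenvalue. You should make the graph lazy explicitly: add $d_1$ further loops at every vertex, which replaces $M$ by $\frac12(I+M)$. Its spectrum lies in $[0,1]$ and its second eigenvalue is still a constant below $1$.

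\textbf{Fitting the paper's use.} The paper needs degree exactly $2^d$, the same at every recursion level regardless of the vertex count. For that, take $d_1$ to be a power of two by padding with loops. Also choose $d_1$ and $\lambda_1$ uniformly in $n$; your construction allows this because block sizes are at most $\lceil N/n\rceil\le C$. Then $D=d_1^k$ depends only on $\lambda$, as required.
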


We will use the following vector-valued version of the expander mixing lemma. 
\begin{lemma}\label{lemma:expander-mixing-vectors}
    Let $H = (V,E)$ be an expander graph with second eigenvalue $\lambda \in (0,1)$. Let $\alpha,\beta:V \to \mathbb{R}^k$ be two functions such that $\mathbb{E}_{x \in V} [\alpha(x)] = \mathbb{E}_{y \in V} [\beta(y)] = 0$. Then,
    \[
        \left |\mathbb{E}_{(x,y) \in E} [\langle \alpha(x), \beta(y) \rangle] \right | \leq \lambda \left ( \mathbb{E}_{x \in V} [\lVert\alpha(x)\rVert_2^2] \right )^{1/2} \left ( \mathbb{E}_{y \in V} [\lVert\beta(y)\rVert_2^2] \right )^{1/2}.
    \]
\end{lemma}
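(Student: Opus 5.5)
The plan is to reduce the vector-valued statement to the scalar expander mixing lemma, or more directly to a spectral bound, by writing everything in terms of the normalized adjacency operator. Let $M=\frac1d A$ be the random-walk matrix of $H$, so that for scalar functions $f,g:V\to\mathbb{R}$ we have
\[
\mathbb{E}_{(x,y)\in E}[f(x)g(y)] = \frac{1}{|V|}\, f^\top M g ,
\]
where the expectation is over a uniformly random (ordered) edge. This identity holds because $H$ is $d$-regular, so a uniform ordered edge is obtained by picking $x$ uniformly and then a uniform neighbor $y$.

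First I decompose coordinatewise. Write $\alpha=(\alpha_1,\dots,\alpha_k)$ and $\beta=(\beta_1,\dots,\beta_k)$ with $\alpha_i,\beta_i\in\mathbb{R}^V$. Then
\[
\mathbb{E}_{(x,y)\in E}[\langle \alpha(x),\beta(y)\rangle]=\frac{1}{|V|}\sum_{i=1}^k \alpha_i^\top M\beta_i .
\]
The hypothesis $\mathbb{E}_x[\alpha(x)]=0$ says that each $\alpha_i$ is orthogonal to the all-ones vector, and likewise for each $\beta_i$. Hence $\alpha_i^\top M\beta_i=\alpha_i^\top (M-J)\beta_i$. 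Since $\|M-J\|\le\lambda$ by the definition of a $\lambda$-expander, we get $|\alpha_i^\top M\beta_i|\le \lambda\|\alpha_i\|_2\|\beta_i\|_2$.

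Next I sum over coordinates and apply Cauchy--Schwarz over $i$:
\[
\sum_i \|\alpha_i\|_2\|\beta_i\|_2\le \Big(\sum_i\|\alpha_i\|_2^2\Big)^{1/2}\Big(\sum_i\|\beta_i\|_2^2\Big)^{1/2}.
\]
Finally, I use $\sum_i\|\alpha_i\|_2^2=\sum_{x\in V}\|\alpha(x)\|_2^2=|V|\,\mathbb{E}_x\|\alpha(x)\|_2^2$, and similarly for $\beta$. The two factors of $|V|^{1/2}$ then cancel the $1/|V|$ prefactor, which gives the claimed inequality.

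Equivalently, one can view $\alpha,\beta$ as vectors in $\mathbb{R}^V\otimes\mathbb{R}^k$ and the bilinear form as $\alpha^\top((M-J)\otimes I_k)\beta$, using that the tensor product has spectral norm at most $\lambda$. There is no real obstacle here. The only points needing care are the normalization of the edge expectation, which relies on regularity and the fact that the undirected graph is symmetric so the ordered-edge distribution is correct, and checking that the mean-zero condition removes the $J$ component on both sides.
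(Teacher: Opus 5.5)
Your proof is correct and follows essentially the same route as the paper: decompose into coordinates, bound each scalar term by $\lambda\|\alpha_i\|_2\|\beta_i\|_2$, and combine the coordinates via Cauchy--Schwarz. The only cosmetic difference is in the scalar step. You derive it directly from $\|M-J\|\le\lambda$ on mean-zero vectors, whereas the paper cites the standard expander mixing lemma.
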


The proof follows from the standard expander-mixing lemma; see \cref{sec:missing proofs} for the details.

\subsection{The INW Generator}

We now recall the INW construction.
Fix parameters \(n,w,\varepsilon\). We assume without loss of generality that \(n = 2^h\), since otherwise the program can be padded with trivial layers.
Let $d$ be an integer parameter to be chosen later. For every $i=1,2,\ldots,h$, let $H_i$ be a $2^d$-regular $\lambda$-expander on the vertex set $\{0,1\}^{(i-1)d+1}$, given by \cref{thm:expander we use}.

The INW generator is defined recursively as follows. For $i\geq 0$, we define a function
\[
    G_i \colon \{0,1\}^{id+1}\to\{0,1\}^{2^i}.
\]
For $G_0 \colon \{0,1\}\to\{0,1\}$, we set $G_0(x)=x$. For $i\geq 1$, write an input to $G_i$ as $(x,y)$, where
\[
    x\in\{0,1\}^{(i-1)d+1}
    \qquad\text{and}\qquad
    y\in\{0,1\}^d.
\]
Then,
\[
    G_i(x,y)
    =
    \bigl(
        G_{i-1}(x),
        G_{i-1}(H_i(x,y))
    \bigr).
\]

The generator for length-$n$ programs is therefore $G_h$, where recall that $h=\log_2 n$. Its seed length is $O(d\log n)$. We instantiate the construction with the expanders given by \cref{thm:expander we use}. Consequently, the overall seed length, as well as the space complexity of the generator, is
$O\!\left(\log(1/\lambda)\cdot \log n\right)$.

We define the analogue of $T_I$ (see \cref{eq:real ti}) for an interval $I\subseteq[n]$, with respect to the pseudorandom walk induced by the INW generator. More precisely, for $i\geq 0$ and for an interval $I\subseteq[n]$ of size $2^i$, define
\[
    \wt{T}_I
    =
    \EE_x\left[ P_I(G_i(x)) \right],
\]
where the expectation is over a uniformly random seed $x$ for $G_i$.

\section{Proof of Main Result}\label{sec:main}

 In this section we prove the following theorem, which is the formal and complete statement of \cref{thm:informal-main}.
 
 \begin{theorem}\label{thm:main-formal-result}
    For every $n,w \in \mathbb{N}$ and for all $0 < \varepsilon \leq 1$ the expander-based INW-generator set with $\lambda = \frac{\varepsilon}{32 w^3}$ $\varepsilon$-fools length-$n$ width-$w$ permutation BPs. It has seed length $$O(\log (1/\lambda) \cdot \log n) = O((\log (1/\varepsilon) + \log w) \cdot \log n).$$
\end{theorem}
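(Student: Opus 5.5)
The plan is to run the standard INW recursion, but to measure the error at each level by a quantity weighted by forward and backward BRRY-type weights, and to show that this quantity does not grow across levels.

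\textbf{Setup.} Fix a permutation BP $P$ with start state $s$ and accept set $A$. For an interval $J=[a,b)$ define
\[
u_J^\top = e_s^\top T_{[1,a)}
\qquad\text{and}\qquad
v_J=T_{[b,n]}\mathbbm{1}_A ,
\]
the true forward distribution entering $J$ and the true backward acceptance-probability vector leaving $J$. For a dyadic $J$ of size $2^i$ split into halves $L,R$, the INW step yields
\[
\wt{T}_J-T_J
=(\wt T_L-T_L)T_R+\wt T_L(\wt T_R-T_R)
+\mathbb{E}_{(x,y)}\bigl[(P_L(G(x))-\wt T_L)(P_R(G(y'))-\wt T_R)\bigr],
\]
where $y'=H_i(x,y)$. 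The last term is the expander term. Contracting it with $u_J$ on the left and a test vector on the right and applying \cref{lemma:expander-mixing-vectors}, I would bound it by $\lambda$ times
\[
\bigl(\mathbb{E}_x\norm{u_J^\top(P_L(G(x))-\wt T_L)}_2^2\bigr)^{1/2}
\]
times the analogous backward quantity for $R$ with $v_J$.

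\textbf{Weights and their sums.} These quantities are BRRY weights. The backward weight of $R$ is the variance of $v$ under random paths through $R$, and the forward weight of $L$ is the corresponding variance of $u$. Since the program is a permutation BP, the transposes $P_{t,b}^\top$ again form a permutation BP. Hence the forward weight is the backward weight of the reversed program, and the expander, being undirected, recycles seeds for it equally well. BRRY's telescoping variance argument then shows that, summed over all layers, each of the forward and backward weights is at most $O(\poly(w))$. For a single layer the bound is $\norm{v_{t}}^2-\norm{T_t v_{t}}^2$, which telescopes; the same holds for $u$.

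\textbf{Main step.} Rather than bounding the error by $\lambda$ times the number of levels, I would prove by induction on $i$ a statement of the form
\[
\bigl|u_J^\top(\wt T_J-T_J)v_J\bigr|
\le c\lambda\sqrt{W^{\mathrm{fwd}}(J)\,W^{\mathrm{bwd}}(J)}\quad\text{(plus variance terms)},
\]
together with companion bounds on the pseudorandom second moments:
\[
\mathbb{E}_x\norm{u^\top P_J(G(x))}^2-\norm{u^\top\wt T_J}^2\le(1+O(\lambda))\,(\text{true variance}).
\]
These second moments are again of the form $\wt T$ applied to the permutation BP $P\otimes P$ restricted to the relevant quadratic form, so they can be handled by the same recursion.

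\textbf{The obstacle.} The hardest part is making the error not accumulate: the cross terms from the two halves must be absorbed into the parent's weights with no additive $\lambda$ loss per level. I would try to set up the potential so that the weight of $J$ is exactly the weight of $L$ plus the weight of $R$, which the telescoping gives. The expander term is then charged via Cauchy--Schwarz, $\sqrt{W_L W_R}\le (W_L+W_R)/2$, and the errors already incurred by the children are multiplied only by true, norm-nonexpanding permutation-averaging operators.

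\textbf{Conclusion.} Summing over the top level with $u=e_s$ and $v=\mathbbm{1}_A$, the total weight is at most $\poly(w)$, roughly $w^3$ up to constants. This gives total error $O(\lambda w^3)$, which is at most $\varepsilon$ once $\lambda=\varepsilon/(32w^3)$. The seed-length bound then follows from \cref{thm:expander we use}.
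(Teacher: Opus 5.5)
The main step, where non-accumulation has to be proved, does not go through as you describe it. There are two concrete gaps.

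First, the potentials you propose leave no room for the expander term. Your variance weight $\norm{u}_2^2-\norm{T_J^\top u}_2^2$ is additive over $J=L\circ R$ but quadratic in $u$, and its square root is linear but only subadditive. For a bound $c\lambda\sqrt{W^{\mathrm{fwd}}(J)W^{\mathrm{bwd}}(J)}$, the recursion produces $\lambda\sqrt{W^{\mathrm{fwd}}_LW^{\mathrm{bwd}}_R}+c\lambda\sqrt{W^{\mathrm{fwd}}_LW^{\mathrm{bwd}}_L}+c\lambda\sqrt{W^{\mathrm{fwd}}_RW^{\mathrm{bwd}}_R}$. When all four weights equal $W$ this is $(2c+1)\lambda W$, against the target $2c\lambda W$, for every $c$. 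Charging instead via $\sqrt{W_LW_R}\le(W_L+W_R)/2$ gives an additive potential. Then the expander term is pure surplus at each level, so $c$ must grow by a constant per level. That is exactly the $\log n$ loss (BRRY's $\log\log n$ in the seed) you are trying to avoid.

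The missing idea is the \emph{bilinear} potential $F_I(p)B_I(q)$. It is a product of two quantities that are both linear and additive: sums of local $\ell_1$ weights, bounded by $2w\norm{\cdot}_1$ via BRRY's Lemma~5 applied to $P$ and to its reversal, not by telescoping. Write $F_I(p)=A+B$ and $B_I(q)=C+D$ across the split. The children cost $c\lambda AC$ and $c\lambda BD$, while the expander term is at most $\lambda F_L(p)B_R(q)=\lambda AD$, which is precisely a cross term of $(A+B)(C+D)$ that the children do not use.

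The expander term also needs no pseudorandom second moments. For permutation programs every single path satisfies $\norm{P_L(x)^\top p-T_L^\top p}_1\le\frac12F_L(p)$, hence $\norm{\alpha_x}_2\le F_L(p)$ for every seed. Your companion bounds are therefore unnecessary, and the $P\otimes P$ reduction would not give them anyway. The term $\mathbb{E}_x\norm{u^\top P_J(G(x))}_2^2$ equals $\norm{u}_2^2$ trivially, while $\norm{u^\top\wt T_J}_2^2$ averages over two independent seeds, which is not INW applied to $P\otimes P$.

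Second, your induction hypothesis is stated only for the canonical pair $(u_J,v_J)$. The middle term of your own decomposition is $u_J^\top\wt T_L(\wt T_R-T_R)v_J$, whose left vector is $\wt T_L^\top u_J$ rather than $u_R=T_L^\top u_J$. So the hypothesis for $R$ does not apply. Your remark that the children's errors are multiplied only by true operators also contradicts this decomposition.

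The paper proves the bound for all $p,q$ and then controls the shift in weight:
\begin{itemize}
\item Let $\delta=\wt T_L^\top p-T_L^\top p$. It sums to zero, so $\norm{\delta}_1=2p^\top(\wt T_L-T_L)r$, where $r$ is the indicator of $\{j:\delta_j\ge 0\}$.
\item The hypothesis on $L$ together with $B_L(r)\le 2w^2$ gives $\norm{\delta}_1\le 4cw^2\lambda F_L(p)$.
\item Then $F_R(\delta)\le 2w\norm{\delta}_1$ gives $F_R(\wt T_L^\top p)\le B+A/c$ once $\lambda\le 1/(8c^2w^3)$. This is where the $w^3$ comes from.
\end{itemize}
The step then closes as $2\lambda AD+c\lambda AC+c\lambda BD\le c\lambda(A+B)(C+D)$ for $c\ge 2$. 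At the top, $F_{[n]}(e_s)B_{[n]}(\mathbbm{1}_A)\le 2w\cdot 2w^2$, so $c=2$ and $\lambda=\varepsilon/(32w^3)$ give error at most $\varepsilon/4$.

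Alternatively, you can write $\Delta_L=\wt T_L-T_L$, expand $\wt T_L=T_L+\Delta_L$, and absorb the second-order term $\Delta_L\Delta_R$ by Cauchy--Schwarz, as in the paper's refined analysis. Either way, this step has to be supplied.
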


The proof is organized as follows. In \cref{subsec:fb-weight}, we introduce the forward and backward weights and establish two useful properties that will be used later. In \cref{subsec:bound-expander-product}, we bound the error of the expander product in terms of these weights; this is the key estimate underlying our error analysis. Finally, in \cref{subsec:error-analysis}, we analyze the error across the recursion and show that it does not deteriorate from one level to the next, thereby proving \cref{thm:main-formal-result}.

\subsection{Forward and Backward Weights}\label{subsec:fb-weight}

The main ingredient in our error analysis is the notion of the program's weight, following \cite{BRRY}. Our key insight is that, for permutation BPs, the weight of the \emph{reversed} program can also be used in the bound. In this section, we define the notions of forward weight and backward weight, where the latter corresponds to the BRRY notion of weight, expressed in the language of operators. Throughout this section, all definitions are with respect to a fixed permutation branching program $P$.

\paragraph{Forward weights.}
Let $1\leq a\leq b\leq n+1$ be integers, let $I=\{a,a+1,\ldots,b-1\}$, and let $p=p_a\in\mathbb{R}^w$. The \emph{forward propagation of $p_a$ along the interval $I$} is the sequence of vectors
$
    p_a,p_{a+1},\ldots,p_b
$
defined recursively, for $t = a,a+1,\ldots,b-1$, by
\[
    p_{t+1}^\top
    =
    p_t^\top T_t
    =
    p_t^\top \cdot \frac{P_{t,0}+P_{t,1}}{2}.
\]
When $p_a$ is a probability distribution, this is precisely the random walk induced by the program, restricted to the interval $I$.

The \emph{forward weight at layer $t\in I$} is defined by
\[
    F_t(p)
    =
    \norm{p_t^\top P_{t,0}-p_{t+1}^\top}_1
    +
    \norm{p_t^\top P_{t,1}-p_{t+1}^\top}_1
    =
    \norm{p_t^\top(P_{t,0}-P_{t,1})}_1.
\]
The \emph{forward weight over the interval $I$} is
\[
    F_I(p)
    =
    \sum_{t=a}^{b-1} F_t(p).
\]
Observe that if we write the interval $I$ as the concatenation of two consecutive intervals $I=L\circ R$, then
\begin{equation}\label{eq:forwards concat}
    F_I(p)
    =
    F_L(p)
    +
    F_R(T_L^\top p).
\end{equation}

\paragraph{Backward weights.}
Let $1\leq a\leq b\leq n+1$ be integers, let $I=\{a,\ldots,b-1\}$, and let $q=q_b\in\mathbb{R}^w$. The \emph{backward propagation of $q_b$ along the interval $I$} is the sequence of vectors
$
    q_b,q_{b-1},\ldots,q_a
$
defined recursively, for $t=b-1,b-2,\ldots,a$, by
\[
    q_t
    =
    T_t q_{t+1}
    =
    \frac{P_{t,0}+P_{t,1}}{2}\cdot q_{t+1}.
\]
If $q_b$ is a probability distribution, then this is the random walk corresponding to the reversed program, since
\[
    q_t^\top
    =
    q_{t+1}^\top T_t^\top
    =
    q_{t+1}^\top\cdot \frac{P_{t,0}^\top+P_{t,1}^\top}{2}.
\]
The \emph{backward weight at layer $t\in I$} is defined by
\[
    B_t(q)
    =
    \norm{P_{t,0}q_{t+1}-q_t}_1
    +
    \norm{P_{t,1}q_{t+1}-q_t}_1
    =
    \norm{(P_{t,0}-P_{t,1})q_{t+1}}_1.
\]
The \emph{backward weight over the interval $I$} is
\[
    B_I(q)
    =
    \sum_{t=a}^{b-1} B_t(q).
\]
Similarly to \cref{eq:forwards concat}, if we write the interval $I$ as the concatenation of two consecutive intervals $I=L\circ R$, then
\begin{equation}\label{eq:backward concat}
    B_I(q)
    =
    B_L(T_R q)
    +
    B_R(q).
\end{equation}

\begin{lemma}\label{lemma:weight-bound}
    Let $P$ be a permutation branching program of width $w$. For every interval
    $I=\{a,\ldots,b-1\}$ and every $p,q\in\mathbb{R}^w$, we have
    \begin{align*}
        B_I(q)
        &\leq
        2\sum_{1\leq i<j\leq w}|q_i-q_j|
        \leq
        2w\norm{q}_1, \\
        F_I(p)
        &\leq
        2\sum_{1\leq i<j\leq w}|p_i-p_j|
        \leq
        2w\norm{p}_1.
    \end{align*}
\end{lemma}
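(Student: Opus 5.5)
Since a permutation matrix maps $q_{t+1}$ to a rearrangement of its coordinates, the vectors $q_t$ along the backward propagation are all obtained from $q=q_b$ by averaging two permuted copies. This suggests a potential function. For a vector $v\in\mathbb{R}^w$ let
\[
\Phi(v)=\sum_{1\le i<j\le w}|v_i-v_j| .
\]
The plan is to show that each layer decreases $\Phi$ by at least half its backward weight. Concretely, we want
\[
\Phi(q_t)\le \Phi(q_{t+1})-\tfrac12 B_t(q).
\]
Telescoping from $t=b-1$ down to $a$ then gives $B_I(q)\le 2(\Phi(q_b)-\Phi(q_a))\le 2\Phi(q)$. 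The second inequality $\Phi(q)\le w\norm{q}_1$ is immediate from $|q_i-q_j|\le|q_i|+|q_j|$, since each coordinate appears in $w-1$ pairs. The forward statement follows by the identical argument applied to the reversed program: $p_{t+1}=T_t^\top p_t$, and $P_{t,b}^\top$ are again permutation matrices, with $F_t(p)=\norm{(P_{t,0}^\top-P_{t,1}^\top)p_t}_1$.

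For the one-step inequality, write $u=P_{t,0}q_{t+1}$ and $v=P_{t,1}q_{t+1}$. Since $\Phi$ is permutation invariant, $\Phi(u)=\Phi(v)=\Phi(q_{t+1})$, and $q_t=(u+v)/2$. Relabel so that $u=v\circ\sigma$ for a permutation $\sigma$, i.e.\ $u_i=v_{\sigma(i)}$. We then need
\[
\Phi\!\left(\tfrac{u+v}{2}\right)\le \Phi(v)-\tfrac12\norm{u-v}_1 .
\]

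This is the main obstacle: it is a strengthened, quantitative convexity of $\Phi$ along averaging with a permutation. The natural tool is the representation
\[
\Phi(v)=\int_{-\infty}^{\infty} N_\theta(v)\bigl(w-N_\theta(v)\bigr)\,d\theta ,
\]
where $N_\theta(v)=|\{i:v_i>\theta\}|$. This holds because $|v_i-v_j|$ is the measure of the set of thresholds $\theta$ separating $v_i$ and $v_j$. Similarly,
\[
\norm{u-v}_1=\int_{-\infty}^{\infty} |S_\theta\,\triangle\, \sigma^{-1}(S_\theta)|\,d\theta ,
\]
with $S_\theta=\{i:v_i>\theta\}$, and $k=|S_\theta|$ is the same for $u$ and $v$. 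For $m=(u+v)/2$ the threshold sets are not simply related, so a pure level-set argument is awkward.

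I would instead try a pairwise argument. For each pair $\{i,j\}$ we have the bound
\[
|m_i-m_j|\le \tfrac12|u_i-u_j|+\tfrac12|v_i-v_j|,
\]
and summing this only gives $\Phi(m)\le\Phi(v)$. The gain must come from pairs where the differences $u_i-u_j$ and $v_i-v_j$ have opposite signs. I would first attempt to prove
\[
\sum_{i<j}\Bigl(\tfrac12|u_i-u_j|+\tfrac12|v_i-v_j|-|m_i-m_j|\Bigr)\ \ge\ \tfrac12\sum_i|u_i-v_i|
\]
via the threshold decomposition. Fix $\theta$ and define $A=\{i:u_i>\theta\}$ and $C=\{i:v_i>\theta\}$, so that $|A|=|C|=k$. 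Pairs $(i,j)$ with $i\in A\setminus C$ and $j\in C\setminus A$ are "crossed", and the number of crossed pairs is at least $|A\setminus C|$, with $|A\triangle C|=2|A\setminus C|$. The remaining work is to check that crossed pairs contribute enough loss, integrated over $\theta$, to cover $\tfrac12\norm{u-v}_1$.

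If the constant does not come out exactly, I would fall back on the simpler bound with the factor $2$ absorbed differently. For instance, I would check small cases such as $w=2$: with $v=(0,1)$ and $u=(1,0)$ we get $\Phi$ going from $1$ to $0$ and $B=2$, so equality holds and the constant $2$ is tight. This suggests that the pairwise-crossing accounting is the right route.
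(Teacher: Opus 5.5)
The telescoping reduction is correct. The one-step inequality $\Phi(q_t)\le\Phi(q_{t+1})-\tfrac12B_t(q)$ sums to $B_I(q)\le 2\Phi(q)$, the bound $\Phi(q)\le w\norm{q}_1$ holds as you say, and the forward case follows by the same argument with $P_{t,b}^\top$. This route differs from the paper's. The paper proves nothing new here: it invokes Lemma~5 of \cite{BRRY} for regular programs, applied to $P$ and to the reversed program, and extends from labels in $[0,1]$ to arbitrary real vectors by an affine rescaling using $T_t\mathbbm{1}=\mathbbm{1}$. Your potential argument would be self-contained and handles real vectors directly. However, you stop at the step that carries all the content. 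You call the one-step inequality the main obstacle, leave the crossed-pair accounting as ``remaining work'', and consider a fallback with a different constant, which would not prove the lemma as stated. As written, this is a genuine gap.

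The gap closes along the lines you sketch, once the pairwise loss is turned into something you can compare threshold by threshold. For a pair with $a=u_i-u_j$ and $b=v_i-v_j$, the loss $\tfrac12|a|+\tfrac12|b|-\tfrac12|a+b|$ is $0$ if $ab\ge 0$ and $\min(|a|,|b|)$ if $ab<0$. Since $\Phi(u)=\Phi(v)$, these losses sum to exactly $\Phi(v)-\Phi(m)$. Fix $\theta$ and set $X_\theta=A\setminus C$ and $Y_\theta=C\setminus A$, so $|X_\theta|=|Y_\theta|$. Every $(i,j)\in X_\theta\times Y_\theta$ has $u_i>\theta\ge u_j$ and $v_j>\theta\ge v_i$. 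So $\{i,j\}$ is discordant, and since $u_i>u_j$ the reversed pair $(j,i)$ never occurs for any $\theta$. The set of $\theta$ with $(i,j)\in X_\theta\times Y_\theta$ is the interval $[\max(u_j,v_i),\min(u_i,v_j))$. Its length is at most $\min(u_i-u_j,\,v_j-v_i)$, which is the loss of that pair. Therefore
\[
\Phi(v)-\Phi(m)\ \ge\ \int |X_\theta|\,|Y_\theta|\,d\theta=\int|X_\theta|^2\,d\theta\ \ge\ \int|X_\theta|\,d\theta=\sum_i (u_i-v_i)^+=\tfrac12\norm{u-v}_1 .
\]
This uses that $|X_\theta|$ is a nonnegative integer and that $\sum_i(u_i-v_i)=0$, since $u$ and $v$ have the same multiset of entries. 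With this step inserted, your proof is complete and gives the constant $2$ exactly.
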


The proof follows from Lemma~5 of \cite{BRRY}, applied both to $P$ and to the reversed program. We also use the observation that the proof of Lemma~5 applies verbatim to arbitrary vectors in $\mathbb{R}^w$; it is not essential that the vector entries are in $[0,1]$. See \cref{sec:missing proofs}.

\begin{lemma}\label{lemma:random-vs-deterministic-walk}
    Fix some permutation branching program $P$ of length $n$ and width $w$, and let $1 \le a \le b \le n+1$ be integers.
    Let $I=\{a,\ldots,b-1\}$, and let $p,q\in\mathbb{R}^w$. Then, for every
    $x\in\{0,1\}^I$, we have
    \begin{align*}
        \norm{P_I(x)^\top p - T_I^\top p}_1
        &\leq
        \frac{1}{2}F_I(p), \\
        \norm{P_I(x)q - T_Iq}_1
        &\leq
        \frac{1}{2}B_I(q).
    \end{align*}
\end{lemma}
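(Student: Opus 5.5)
We will prove the backward inequality by a telescoping (hybrid) argument; the forward inequality is then obtained by the same argument applied to transposes.

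\textbf{Backward bound.} Fix $x\in\{0,1\}^I$ and let $q_b=q,q_{b-1},\ldots,q_a$ be the backward propagation of $q$ along $I$, so $q_t=T_tT_{t+1}\cdots T_{b-1}q$. For $t\in\{a,\ldots,b\}$ define the hybrid vectors
\[
h_t = P_{a,x_a}\cdots P_{t-1,x_{t-1}}\, q_t ,
\]
so that $h_b=P_I(x)q$ and $h_a=q_a=T_Iq$. Telescoping gives
\[
P_I(x)q-T_Iq=\sum_{t=a}^{b-1}(h_{t+1}-h_t)
=\sum_{t=a}^{b-1}P_{a,x_a}\cdots P_{t-1,x_{t-1}}\bigl(P_{t,x_t}q_{t+1}-T_tq_{t+1}\bigr).
\]
The prefix product is a product of permutation matrices, hence a permutation matrix, and so it preserves the $\ell_1$ norm. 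This is the one place where the permutation property is used.

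It then suffices to evaluate each term. Writing $T_t=(P_{t,0}+P_{t,1})/2$, we have
\[
P_{t,x_t}q_{t+1}-T_tq_{t+1}=\pm\tfrac12(P_{t,0}-P_{t,1})q_{t+1},
\]
whose $\ell_1$ norm is exactly $\tfrac12 B_t(q)$. Applying the triangle inequality over $t$ yields $\norm{P_I(x)q-T_Iq}_1\le \tfrac12\sum_t B_t(q)=\tfrac12 B_I(q)$.

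\textbf{Forward bound.} Run the same argument with the roles of left and right exchanged. Set
\[
g_t^\top = p_t^\top P_{t,x_t}\cdots P_{b-1,x_{b-1}},
\]
where $p_t$ is the forward propagation of $p$ along $I$, so that $g_a^\top=p^\top P_I(x)$ and $g_b^\top=p_b^\top=p^\top T_I$. The difference $g_t-g_{t+1}$ equals $(p_t^\top P_{t,x_t}-p_t^\top T_t)$ multiplied on the right by a permutation matrix, which again preserves $\ell_1$ norm. Its norm is therefore $\tfrac12\norm{p_t^\top(P_{t,0}-P_{t,1})}_1=\tfrac12F_t(p)$. Summing over $t$ gives the first inequality, since $\norm{P_I(x)^\top p-T_I^\top p}_1=\norm{p^\top P_I(x)-p^\top T_I}_1$.

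There is no real obstacle here. The only points needing care are choosing hybrids so that the propagated vector $q_{t+1}$ (respectively $p_t$) appears in each term, and checking the endpoint identifications $h_a=T_Iq$ and $g_b=T_I^\top p$, both of which follow directly from the recursive definitions of the propagations.
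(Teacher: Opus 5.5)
Your proof is correct and is essentially the paper's argument. The paper proves the forward bound via the one-step estimate $\norm{r_{t+1}-p_{t+1}}_1\le \norm{r_t-p_t}_1+\tfrac12 F_t(p)$ for the deterministic walk $r_t$, using that permutation matrices preserve the $\ell_1$ norm; unrolling that induction gives exactly your telescoping hybrid sum, and your backward hybrids $h_t$ spell out the case the paper dismisses as ``similar.''
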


\begin{proof}
    We first prove the forward inequality. Let
    $
        p_a,p_{a+1},\ldots,p_b
    $
    be the forward propagation of $p_a=p$ along $I$, so that
    $p_{t+1}=T_t^\top p_t$. For $t \in \{a,a+1,\ldots,b\}$, define
    \[
        r_t
        =
        \left(P_{a,x_a}P_{a+1,x_{a+1}}\cdots P_{t-1,x_{t-1}}\right)^\top p,
    \]
    with the convention that $r_a=p$. Thus $r_b=P_I(x)^\top p$ and
    $p_b=T_I^\top p$. Moreover,
    $
        r_{t+1}
        =
        P_{t,x_t}^\top r_t .
    $
    Therefore, for every $t\in I$,
    \begin{align*}
        \norm{r_{t+1}-p_{t+1}}_1
        &=
        \norm{P_{t,x_t}^\top r_t - p_{t+1}}_1 \\
        &\leq
        \norm{P_{t,x_t}^\top(r_t-p_t)}_1
        +
        \norm{P_{t,x_t}^\top p_t-p_{t+1}}_1 \\
        &=
        \norm{r_t-p_t}_1
        +
        \norm{P_{t,x_t}^\top p_t-p_{t+1}}_1 \\
        &=
        \norm{r_t-p_t}_1
        +
        \frac{1}{2} F_t(p_t),
    \end{align*}
    where we used that permutation matrices preserve the $\ell_1$ norm. Since
    $r_a=p_a=p$, induction gives
    \[
        \norm{P_I(x)^\top p - T_I^\top p}_1
        =
        \norm{r_b-p_b}_1
        \leq
        \frac{1}{2}\sum_{t=a}^{b-1} F_t(p_t)
        =
        \frac{1}{2}F_I(p).
    \]
    
    The claim for the backward inequality is proved in a similar way.
\end{proof}

\subsection{Analyzing the Expander Product}\label{subsec:bound-expander-product}

In this subsection, we prove the following proposition, which is the key estimate in our error analysis.

\begin{proposition}\label{prop:local-expander-product-error}
    Fix some permutation branching program of length $n$ and width $w$.
    Let $p,q\in\mathbb{R}^w$, and let $I=L\circ R\subseteq[n]$ be an interval of size $2^i$, where $L$ and $R$ are its left and right halves. Then
    \[
        \left|
            p^\top\bigl(\wt{T}_I-\wt{T}_L\wt{T}_R\bigr)q
        \right|
        \leq
        \lambda\cdot F_L(p)\cdot B_R(q).
    \]
\end{proposition}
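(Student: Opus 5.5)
We write out $\wt{T}_I$ via the recursive definition and then apply the vector-valued mixing lemma to centered functions. Let $m=|L|=|R|=2^{i-1}$ and let $V=\{0,1\}^{(i-1)d+1}$ be the seed space of $G_{i-1}$. By definition of $G_i$,
\[
    \wt{T}_I=\EE_{x\in V,\,y}\bigl[P_L(G_{i-1}(x))\,P_R(G_{i-1}(H_i(x,y)))\bigr]=\EE_{(x,x')\in E(H_i)}\bigl[P_L(G_{i-1}(x))P_R(G_{i-1}(x'))\bigr].
\]
Likewise $\wt{T}_L\wt{T}_R=\EE_{x,x'\in V}[P_L(G_{i-1}(x))P_R(G_{i-1}(x'))]$, with $x,x'$ independent. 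Define
\[
\alpha(x)=P_L(G_{i-1}(x))^\top p-\wt{T}_L^\top p,\qquad \beta(x')=P_R(G_{i-1}(x'))q-\wt{T}_Rq .
\]
Both functions have mean zero over uniform $x$ (respectively $x'$). Expanding $p^\top A B q$ with $A=\wt{T}_L+(A-\wt{T}_L)$ and $B=\wt{T}_R+(B-\wt{T}_R)$, the cross terms vanish under both the edge distribution and the product distribution, because the marginals of a uniform edge of a regular graph are uniform. Hence
\[
p^\top(\wt{T}_I-\wt{T}_L\wt{T}_R)q=\EE_{(x,x')\in E}\langle\alpha(x),\beta(x')\rangle .
\]

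Lemma~\ref{lemma:expander-mixing-vectors} then bounds the absolute value of this quantity by $\lambda\,(\EE\|\alpha\|_2^2)^{1/2}(\EE\|\beta\|_2^2)^{1/2}$. It remains to show $\|\alpha(x)\|_2\le F_L(p)$ and $\|\beta(x')\|_2\le B_R(q)$ pointwise. Using $\|\cdot\|_2\le\|\cdot\|_1$ and the triangle inequality,
\[
\|\alpha(x)\|_1\le\|P_L(z)^\top p-T_L^\top p\|_1+\|T_L^\top p-\wt{T}_L^\top p\|_1 .
\]
Lemma~\ref{lemma:random-vs-deterministic-walk} bounds the first term by $\tfrac12F_L(p)$. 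The second term equals $\|\EE_{x}[P_L(G_{i-1}(x))^\top p]-T_L^\top p\|_1$, a convex combination of quantities each bounded by the same lemma, so it is also at most $\tfrac12F_L(p)$. This gives $\|\alpha(x)\|_1\le F_L(p)$, and the same argument with the backward inequality gives $\|\beta(x')\|_1\le B_R(q)$. Combining the two bounds yields the proposition.

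The main subtlety is to center both $\alpha$ and $\beta$ around the \emph{pseudorandom} averages $\wt{T}_L,\wt{T}_R$, which is needed for the mean-zero hypothesis of the mixing lemma. At the same time, the bounds must be expressed in terms of the \emph{true} weights $F_L,B_R$. This is handled by passing through $T_L$ and using that Lemma~\ref{lemma:random-vs-deterministic-walk} holds for every fixed input string, including every output of $G_{i-1}$, at the cost of a factor of $2$. The factor of $2$ cancels against the $\tfrac12$ in that lemma, so no constant is lost.
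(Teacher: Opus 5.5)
Your proposal is correct and follows essentially the same route as the paper's proof. Both center at $\wt{T}_L,\wt{T}_R$ and use the uniform marginals of $H_i$ to rewrite the error as $\EE_{(x,x')\in E}\langle\alpha(x),\beta(x')\rangle$. Both then apply \cref{lemma:expander-mixing-vectors} and bound $\|\alpha(x)\|_2\le\|\alpha(x)\|_1\le F_L(p)$ (and likewise $\|\beta(x')\|_2\le B_R(q)$) by passing through $T_L$ (resp.\ $T_R$) via \cref{lemma:random-vs-deterministic-walk} together with convexity.
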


\begin{proof}
    Let $\Omega=\{0,1\}^{(i-1)d+1}$
    be the seed space of $G_{i-1}$. For $x,z\in\Omega$, define
    \[
        A_x=P_L(G_{i-1}(x)),
        \qquad
        B_z=P_R(G_{i-1}(z)).
    \]
    Then
    \[
        \wt{T}_L=\EE_x A_x,
        \qquad
        \wt{T}_R=\EE_z B_z,
        \qquad
        \wt{T}_L\wt{T}_R=\EE_{x,z} A_xB_z,
    \]
    where $x,z$ are independent and uniform over $\Omega$. Moreover,
    \[
        \wt{T}_I=\EE_{(x,z)\sim H_i} A_xB_z,
   \]
    where $(x,z)\sim H_i$ means that $x$ is uniform over $\Omega$ and $z$ is a uniformly random neighbor of $x$ in $H_i$.     Define
    \[
        \alpha_x=A_x^\top p-\wt{T}_L^\top p,
        \qquad
        \beta_z=B_zq-\wt{T}_Rq.
    \]
    Then, $\EE_x\alpha_x=0$ and $\EE_z\beta_z=0$. Moreover,
    \begin{align*}
        \langle \alpha_x,\beta_z\rangle
        &=
        p^\top(A_x-\wt{T}_L)(B_z-\wt{T}_R)q \\
        &=
        p^\top
        \bigl(
            A_xB_z-\wt{T}_LB_z-A_x\wt{T}_R+\wt{T}_L\wt{T}_R
        \bigr)q.
    \end{align*}
    Taking expectation over $(x,z)\sim H_i$, and using that both marginals of this distribution are uniform over $\Omega$, we get
    \begin{equation}\label{eq:coffee}
        \EE_{(x,z)\sim H_i}\langle \alpha_x,\beta_z\rangle
        =
        p^\top(\wt{T}_I-\wt{T}_L\wt{T}_R)q.
    \end{equation}
    Hence, by \cref{lemma:expander-mixing-vectors},
    \[
        \left|
        p^\top(\wt{T}_I-\wt{T}_L\wt{T}_R)q
        \right|
        \leq
        \lambda
        \left(\EE_x\norm{\alpha_x}_2^2\right)^{1/2}
        \left(\EE_z\norm{\beta_z}_2^2\right)^{1/2}.
    \]

    It remains to bound the two terms on the right-hand side. For every $x\in\Omega$, 
    \[
        \alpha_x
        =
        A_x^\top p-T_L^\top p
        +
        T_L^\top p-\wt{T}_L^\top p.
    \]
    By \cref{lemma:random-vs-deterministic-walk},
    \[
        \norm{A_x^\top p-T_L^\top p}_1\leq \frac{1}{2} F_L(p).
    \]
    Also,
    \[
        T_L^\top p-\wt{T}_L^\top p
        =
        \EE_{x'}\left[T_L^\top p-A_{x'}^\top p\right],
    \]
    and therefore, again by \cref{lemma:random-vs-deterministic-walk} and convexity,
    \[
        \norm{T_L^\top p-\wt{T}_L^\top p}_1\leq \frac{1}{2} F_L(p).
    \]
    Thus
    \[
        \norm{\alpha_x}_2\leq \norm{\alpha_x}_1\leq F_L(p).
    \]
    Similarly, for every $z\in\Omega$,
    \[
        \norm{\beta_z}_2\leq \norm{\beta_z}_1\leq B_R(q).
    \]
    Substituting these bounds gives
    \[
        \left|
        p^\top(\wt{T}_I-\wt{T}_L\wt{T}_R)q
        \right|
        \leq
        \lambda F_L(p)B_R(q),
    \]
    as required.
\end{proof}

\subsection{Error Analysis}\label{subsec:error-analysis}

In this section we analyze the error propagation throughout the recursion.  

\begin{proposition}\label{prop:error-bound}
    Let $\lambda > 0$ be a bound on the second eigenvalues of the expander graphs in the expander-based INW construction.
    Fix a permutation branching program $P$ of length $n$ and width $w$.
    Let $c\geq 2$, and assume that
    \[
        \lambda \leq \frac{1}{8c^2w^3}.
    \]
    Let $p,q\in\mathbb{R}^w$, and let $I\subseteq[n]$ be an interval of size $2^i$. Define
    \[
        \EEE_I(p,q)
        =
        \left|p^\top(\wt{T}_I-T_I)q\right|.
    \]
    Then,
    \[
        \EEE_I(p,q)
        \leq
        c\lambda F_I(p)B_I(q).
    \]
\end{proposition}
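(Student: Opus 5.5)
We argue by induction on $i$, proving the bound simultaneously for all intervals $I$ of size $2^i$ and all $p,q\in\mathbb{R}^w$. For $i=0$ we have $G_0(x)=x$ with $x$ a uniform bit, so $\wt{T}_I=T_I$ and $\EEE_I(p,q)=0$. For the inductive step, write $I=L\circ R$ and set $E_L=\wt{T}_L-T_L$ and $E_R=\wt{T}_R-T_R$. We decompose
\[
\wt{T}_I-T_I=\bigl(\wt{T}_I-\wt{T}_L\wt{T}_R\bigr)+E_LT_R+T_LE_R+E_LE_R,
\]
using $T_I=T_LT_R$. We then bound the four resulting terms of $p^\top(\cdot)q$ separately.

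\textbf{The first three terms.} The expander term is at most $\lambda F_L(p)B_R(q)$ by \cref{prop:local-expander-product-error}. For the second term, apply the induction hypothesis on $L$ with the vector $T_Rq$:
\[
|p^\top E_L (T_Rq)|\le c\lambda F_L(p)\,B_L(T_Rq).
\]
For the third term, apply the induction hypothesis on $R$ with the vector $T_L^\top p$:
\[
|(T_L^\top p)^\top E_R q|\le c\lambda F_R(T_L^\top p)\,B_R(q).
\]

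\textbf{The cross term (the main obstacle).} The difficulty is that $|p^\top E_LE_Rq|$ must carry a factor $\lambda^2$, so that its contribution can be absorbed. A naive bound would only give $O(c\lambda w)F_L(p)B_R(q)$, which is too weak.

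Set $v=E_Rq$. By the induction hypothesis on $L$ and \cref{lemma:weight-bound},
\[
|p^\top E_L v|\le c\lambda F_L(p)B_L(v)\le 2wc\lambda F_L(p)\norm{v}_1.
\]
To bound $\norm{v}_1$, expand in the standard basis and apply the induction hypothesis on $R$ coordinatewise:
\[
\norm{v}_1\le\sum_{j=1}^w|e_j^\top E_Rq|\le\sum_{j=1}^w c\lambda F_R(e_j)B_R(q)\le 2w^2c\lambda B_R(q).
\]
The last step uses $F_R(e_j)\le 2w\norm{e_j}_1=2w$, again by \cref{lemma:weight-bound}. Combining the two displays,
\[
|p^\top E_LE_Rq|\le 4c^2\lambda^2w^3F_L(p)B_R(q)\le \tfrac{\lambda}{2}F_L(p)B_R(q),
\]
where the final inequality uses the hypothesis $\lambda\le 1/(8c^2w^3)$.

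\textbf{Combining.} Summing the four bounds gives
\[
\EEE_I(p,q)\le c\lambda\Bigl[F_L(p)B_L(T_Rq)+F_R(T_L^\top p)B_R(q)\Bigr]+\tfrac{3}{2}\lambda F_L(p)B_R(q).
\]
Since $c\ge 2$, we have $\tfrac32\lambda\le c\lambda$, so the right-hand side is at most
\[
c\lambda\Bigl[F_L(p)B_L(T_Rq)+F_R(T_L^\top p)B_R(q)+F_L(p)B_R(q)\Bigr].
\]
By \cref{eq:forwards concat} and \cref{eq:backward concat}, we have $F_I(p)=F_L(p)+F_R(T_L^\top p)$ and $B_I(q)=B_L(T_Rq)+B_R(q)$, and all weights are nonnegative. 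Expanding the product $F_I(p)B_I(q)$ therefore yields four nonnegative terms, three of which are exactly the bracket above. Hence the expression is at most $c\lambda F_I(p)B_I(q)$, which closes the induction.
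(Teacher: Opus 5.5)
Your proof is correct, but it is organized differently from the paper's.

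The paper uses only three terms: $p^\top(\wt{T}_I-\wt{T}_L\wt{T}_R)q$, $p^\top\wt{T}_L(\wt{T}_R-T_R)q$ and $p^\top(\wt{T}_L-T_L)T_Rq$. It applies the induction hypothesis on $R$ directly with the pseudorandom left vector $\wt{T}_L^\top p$, and absorbs the second-order effect by perturbing the argument of $F_R$. With $\delta=\wt{T}_L^\top p-T_L^\top p$, it uses subadditivity of $F_R$ and $F_R(\delta)\le 2w\norm{\delta}_1$. It then bounds $\norm{\delta}_1=2\langle\delta,r\rangle$, with $r$ the indicator of the nonnegative coordinates of $\delta$, by one further application of the induction hypothesis on $L$. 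That identity needs $\sum_j\delta_j=0$, i.e., double stochasticity.

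You instead split off $p^\top E_LE_Rq$ explicitly. This is exactly the four-term decomposition the paper introduces only in its refined analysis of \cref{sec:refined}, where the cross term is handled by $\ell_2$ Cauchy--Schwarz. You bound it in the mirror-image way: the error vector sits on the backward side as $c\lambda F_L(p)B_L(E_Rq)$, and you control $\norm{E_Rq}_1$ coordinatewise through $w$ applications of the induction hypothesis with $e_j$. This requires no zero-sum property.

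Both routes pay $O(c^2\lambda^2w^3)F_L(p)B_R(q)$ for the second-order contribution, and so need the same hypothesis $\lambda\le 1/(8c^2w^3)$. Your coefficient $\tfrac32\lambda$ on $F_L(p)B_R(q)$ is marginally better than the paper's $2\lambda$.

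Your organization never perturbs the argument of the nonlinear functional $F_R$. It also already yields $c\lambda\bigl[F_L(p)B_L(T_Rq)+F_R(T_L^\top p)B_R(q)+F_L(p)B_R(q)\bigr]$. So the same argument, run with the potential $\Phi_I$ as induction hypothesis (using $\Phi_L\le F_LB_L$ inside the cross term), reproves the refined bound of \cref{prop:refined-forward-backward}. The condition becomes $1+4c^2\lambda w^3\le c$, i.e., an $\ell_1$-type constant in place of $K$. The paper's version keeps the main argument to three terms, at the cost of the perturbation estimate for $F_R(\wt{T}_L^\top p)$.
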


\begin{proof}
    We prove the proposition by induction on $i$. When $i=0$, we have
    $\wt{T}_I=T_I$, and the claim is immediate. Assume therefore that $i\geq 1$,
    and that the claim holds for intervals of size $2^{i-1}$.

    Write $I=L\circ R$, where $L$ and $R$ are the left and right halves of $I$.
    Set
    \[
        A=F_L(p),
        \qquad
        B=F_R(T_L^\top p),
        \qquad
        C=B_L(T_Rq),
        \qquad
        D=B_R(q).
    \]
    By the decomposition properties of forward and backward weights (see \cref{eq:forwards concat,eq:backward concat})
    \[
        F_I(p)=A+B,
        \qquad
        B_I(q)=C+D.
    \]
    Thus it suffices to prove
    \[
        \EEE_I(p,q)
        \leq
        c\lambda(A+B)(C+D).
    \]

    We decompose the error as
    \begin{align}
        p^\top(\wt{T}_I-T_I)q
        &=
        p^\top(\wt{T}_I-\wt{T}_L\wt{T}_R)q
        \label{summand:expander-prod} \\
        &\quad+
        p^\top\wt{T}_L(\wt{T}_R-T_R)q
        \label{summand:right-error} \\
        &\quad+
        p^\top(\wt{T}_L-T_L)T_Rq.
        \label{summand:left-error}
    \end{align}
    We bound the three terms separately.

    For \eqref{summand:expander-prod}, \cref{prop:local-expander-product-error} gives
    \[
        \left|
            p^\top(\wt{T}_I-\wt{T}_L\wt{T}_R)q
        \right|
        \leq
        \lambda AD.
    \]
    For \eqref{summand:left-error}, the induction hypothesis applied to the
    interval $L$, with vectors $p$ and $T_Rq$, gives
    \[
        \left|
            p^\top(\wt{T}_L-T_L)T_Rq
        \right|
        \leq
        c\lambda AC.
    \]
    For \eqref{summand:right-error}, the induction hypothesis applied to the
    interval $R$, with vectors $\wt{T}_L^\top p$ and $q$, gives
    \begin{equation}\label{eq:milo}
        \left|
            p^\top\wt{T}_L(\wt{T}_R-T_R)q
        \right|
        =
        \left|
            (\wt{T}_L^\top p)^\top(\wt{T}_R-T_R)q
        \right|
        \leq
        c\lambda D\cdot F_R(\wt{T}_L^\top p).
    \end{equation}
    We next compare $F_R(\wt{T}_L^\top p)$ with
    $F_R(T_L^\top p)=B$.

    Let
    \[
        \delta=\wt{T}_L^\top p-T_L^\top p.
    \]
    Since both $T_L$ and $\wt{T}_L$ are doubly stochastic, we have
    $\sum_{j=1}^w\delta_j=0$. Let $r\in\{0,1\}^w$ be the indicator vector of
    the set of coordinates on which $\delta$ is nonnegative, namely
    \[
        r_j=
        \begin{cases}
            1, & \delta_j\geq 0, \\
            0, & \text{otherwise}.
        \end{cases}
    \]
    Then
    \[
        \norm{\delta}_1
        =
        2\langle \delta,r\rangle.
    \]
    Moreover,
    \[
        \langle \delta,r\rangle
        =
        p^\top(\wt{T}_L-T_L)r.
    \]
    Applying the induction hypothesis to the interval $L$, with vectors $p$ and
    $r$, we obtain
    \[
        |\langle \delta,r\rangle|
        \leq
        c\lambda F_L(p)B_L(r).
    \]
    By \cref{lemma:weight-bound}, and since $\norm{r}_1\leq w$,
    \[
        B_L(r)\leq 2w\norm{r}_1\leq 2w^2.
    \]
    Hence
    \[
        \norm{\delta}_1
        \leq
        4w^2c\lambda A.
    \]

    Using the subadditivity of $F_R(\cdot)$ and \cref{lemma:weight-bound}, we get
    \[
        F_R(\wt{T}_L^\top p)
        \leq
        F_R(T_L^\top p)+F_R(\delta)
        \leq
        B+2w\norm{\delta}_1
        \leq
        B+8w^3c\lambda A.
    \]
    Since $\lambda\leq 1/(8c^2w^3)$, it follows that
    \[
        F_R(\wt{T}_L^\top p)
        \leq
        B+\frac{A}{c}.
    \]
    Therefore,
    \[
        \left|
            p^\top\wt{T}_L(\wt{T}_R-T_R)q
        \right|
        \leq
        c\lambda DB+\lambda DA.
    \]

    Combining the three bounds, we conclude that
    \begin{equation}\label{eq:before}
        \EEE_I(p,q)
        \leq
        2\lambda AD+c\lambda AC+c\lambda DB.
    \end{equation}
    Since $c\geq 2$, this is at most
    \begin{equation}\label{eq:after}
        c\lambda(A+B)(C+D),
    \end{equation}
    completing the induction.
\end{proof}

We are finally ready to prove our main result.

\begin{proof}[Proof of \cref{thm:main-formal-result}]
    Without loss of generality we may assume that $n$ is a power of $2$, since otherwise we may add at most $n$ identity layers to the program.    
    By applying \cref{prop:error-bound} and \cref{lemma:weight-bound} with $c=2$, $q = \mathbbm{1}_A$, $p = e_s$, we conclude
    \[
        \left |p^{\top} \left ( T_{[n]} - \wt{T}_{[n]} \right ) q \right | \leq c \lambda F_{[n]}(p)B_{[n]}(q) \leq 4 c \lambda w^2 |A| < \varepsilon.
    \]
\end{proof}

\section{The Forward--Backward Seminorm}\label{sec:seminorm}

In this short section we present a different perspective on the proof of
\cref{prop:local-expander-product-error}. Namely, we view the proof as controlling the propagation of
error through the recursion in terms of a family of seminorms\footnote{A seminorm on a vector space is a function that satisfies
the triangle inequality and homogeneity, just like a norm, but is allowed to
vanish on nonzero vectors. Homogeneity means that scaling the input scales the
value by the absolute value of the scalar: if $\rho$ is the seminorm, then
$\rho(\alpha x)=|\alpha|\rho(x)$ for every scalar $\alpha$. In our setting, this means that the seminorm may assign value zero to a nonzero error matrix, namely one whose action is not detected by the relevant forward and backward test vectors.} adapted to the
underlying subprograms.

\begin{definition}
Fix a permutation branching program of length $n$ and width $w$, and let
$I \subseteq [n]$ be an interval. For a $w \times w$ real matrix $\Delta$, define
the forward--backward seminorm by\footnote{If the supremum in the definition is infinite or taken over an empty set, we define the seminorm to be $\infty$.}
\[
        \|\Delta\|_{\mathrm{fb},I}
        =
        \sup_{p,q:\,F_I(p)B_I(q)>0}
        \frac{|p^\top \Delta q|}{F_I(p)B_I(q)}.
\]
\end{definition}

Thus, every interval $I$ of the program induces its own seminorm, tailored to
the behavior of the branching program on that interval. In other words, the
error is measured with respect to the ``geometry'' induced by the
corresponding subprogram.

With this terminology, \cref{prop:error-bound} can be restated in the following form. 

\begin{proposition}\label{prop:error-bound-seminorm}
Fix a permutation
branching program of length $n$ and width $w$. Let $c \ge 2$, and assume that
\(
        \lambda \le \frac{1}{8c^2 w^3}.
\)
Then, for every dyadic interval $I \subseteq [n]$,
\[
        \|T_I - \widetilde{T}_I\|_{\mathrm{fb},I} \le c\lambda .
\]
\end{proposition}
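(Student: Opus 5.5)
This is a direct reformulation of \cref{prop:error-bound}, so I will deduce it from that proposition rather than redo the induction. Fix a dyadic interval $I$ of size $2^i$ and set $\Delta = T_I - \wt{T}_I$. By definition, $\|\Delta\|_{\mathrm{fb},I}$ is a supremum of $|p^\top \Delta q|/(F_I(p)B_I(q))$ over pairs $p,q\in\mathbb{R}^w$ with $F_I(p)B_I(q)>0$. For any such pair, \cref{prop:error-bound} (with the same $c\ge 2$ and $\lambda\le 1/(8c^2w^3)$) gives
\[
|p^\top \Delta q| = \EEE_I(p,q) \le c\lambda F_I(p)B_I(q).
\]
Dividing by the positive quantity $F_I(p)B_I(q)$ bounds each ratio by $c\lambda$, and taking the supremum yields $\|T_I-\wt{T}_I\|_{\mathrm{fb},I}\le c\lambda$.

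The only subtle point is the footnote convention: the seminorm is $\infty$ when the index set is empty, or when the supremum is infinite. The latter cannot happen, since every ratio is bounded by $c\lambda$. For the former, I must check that some $p,q$ have $F_I(p)B_I(q)>0$. If none do, then either $F_I\equiv 0$ or $B_I\equiv 0$ on $\mathbb{R}^w$. Since $F_I(p)\ge F_a(p)=\norm{p^\top(P_{a,0}-P_{a,1})}_1$ and $B_I(q)\ge B_{b-1}(q)=\norm{(P_{b-1,0}-P_{b-1,1})q}_1$, this happens only in degenerate programs; for example, when $P_{t,0}=P_{t,1}$ for every $t\in I$.

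In that degenerate case the statement should be read as vacuous, or (the route I would take) one observes directly that $\Delta=0$. Indeed, if $P_{t,0}=P_{t,1}$ for all $t\in I$, then every path gives the same product, so $\wt{T}_I=T_I$. More generally, \cref{prop:error-bound} already shows $p^\top\Delta q=0$ whenever $F_I(p)B_I(q)=0$, so the error is invisible to the seminorm there. I would therefore adopt the natural convention that the supremum over an empty set is $0$ when $\Delta$ vanishes on all test pairs, or else note the empty-set caveat explicitly.

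Apart from this bookkeeping there is no real obstacle: all the analytic work, namely the non-accumulation of error via the forward–backward decomposition and the bound on $F_R(\wt{T}_L^\top p)$, is already contained in \cref{prop:error-bound}.
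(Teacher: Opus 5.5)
Your argument is exactly the paper's: the paper gives no separate proof and presents this as a direct restatement of \cref{prop:error-bound}, which your divide-by-$F_I(p)B_I(q)$-and-take-supremum step makes precise. Your caveat is also correct and catches a point the paper glosses over. The index set is empty precisely when $P_{t,0}=P_{t,1}$ for every $t\in I$ (for instance, a dyadic block made entirely of padding identity layers). In that case $\Delta=0$, but the footnote's $\infty$ convention would make the inequality literally false, so one must either use $\sup\emptyset=0$ or exclude such intervals.
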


This formulation highlights the conceptual content of the proof of
\cref{prop:error-bound}. Rather than controlling error propagation throughout
the recursion using a single fixed norm, such as the spectral norm or the
induced $\ell_1$ norm, the proof uses a different seminorm for each dyadic
subprogram. These seminorms are chosen to match the forward and
backward structure of the corresponding interval. The main point is that,
with respect to these adapted error measures, the error does not deteriorate
through the recursion beyond a universal bound of $O(\lambda)$ provided $\lambda$ is taken sufficiently small.

\subsection{Relation to the Seminorm of \cite{chen2023weighted}}
\label{sec:variance-seminorm-comparison}

In this section, we discuss how our seminorm relates to the one used by Chen, Hoza, Lyu, Tal, and Wu~\cite{chen2023weighted} that builds on the singular-value approximation \cite{APP23}.
 Although the two seminorms are different and arise in  different contexts---theirs is used for the non-black-box derandomization of regular ROBPs, whereas ours is used to control error throughout the INW recursion---we believe that the comparison is instructive.

Let $I=L\circ R$ correspond to one step of the INW recursion, and let
$\Omega$ be the seed space of the child generator. Recall the notation
introduced in the proof of \cref{prop:local-expander-product-error}. For
$x,z\in\Omega$, let $A_x=P_L(G_{i-1}(x))$, $
B_z=P_R(G_{i-1}(z))$, 
and recall that
$\widetilde T_L=\mathbb E_x[A_x]$, $   \widetilde T_R=\mathbb E_z[B_z]$.
For $p,q\in\mathbb R^w$, we defined
$\alpha_x =A_x^{\top}p-\widetilde T_L^{\top}p$, $\beta_z =B_zq-\widetilde T_Rq$.
In \cref{eq:coffee}, we showed that
\[
    \mathbb E_{(x,z)\sim H_i}\langle\alpha_x,\beta_z\rangle
    =p^{\top}(\widetilde T_I-\widetilde T_L\widetilde T_R)q.
\]
Invoking \cref{lemma:expander-mixing-vectors}, we then deduced that
\begin{equation}
    \left|p^{\top}(\widetilde T_I-\widetilde T_L\widetilde T_R)q\right|
    \leq \lambda
    \bigl(\mathbb E_x\|\alpha_x\|_2^2\bigr)^{1/2}
    \bigl(\mathbb E_z\|\beta_z\|_2^2\bigr)^{1/2}.
    \label{eq:variance-mixing}
\end{equation}
At this point, our discussion departs from the proof of
\cref{prop:local-expander-product-error}.

Since $A_x$ and $B_z$ are permutation matrices, a direct calculation
shows that the two variances appearing on the right-hand side of
\cref{eq:variance-mixing} are exactly
\begin{align*}
    \mathbb E_x\|\alpha_x\|_2^2
    &=p^{\top}(I-\widetilde T_L\widetilde T_L^{\top})p,\\
    \mathbb E_z\|\beta_z\|_2^2
    &=q^{\top}(I-\widetilde T_R^{\top}\widetilde T_R)q.
\end{align*}
For a matrix $M$, define its left and right ``defect''\footnote{We refer to these as the left and right defect quantities of $M$, since
they measure the failure of $M^\top$ and $M$, respectively, to preserve
Euclidean norm:
$
    D_M^-(p)^2=\|p\|_2^2-\|M^\top p\|_2^2,\,
    D_M^+(q)^2=\|q\|_2^2-\|Mq\|_2^2$.
} quantities by
\[
    D_M^-(p):=\bigl(p^{\top}(I-MM^{\top})p\bigr)^{1/2},
    \qquad
    D_M^+(q):=\bigl(q^{\top}(I-M^{\top}M)q\bigr)^{1/2}.
\]
We can therefore rewrite \cref{eq:variance-mixing} as
\begin{equation}
    \left|p^{\top}(\widetilde T_I-\widetilde T_L\widetilde T_R)q\right|
    \leq
    \lambda D_{\widetilde T_L}^-(p)
            D_{\widetilde T_R}^+(q).
    \label{eq:local-defect-estimate}
\end{equation}
The forward and backward weights used in the proof of
\cref{prop:local-expander-product-error} yield upper bounds on these defect quantities. Although these bounds are coarser, they are more convenient for our purposes.

The quantities $D^-$ and $D^+$ also underlie the program-dependent
seminorm introduced in
\cite{chen2023weighted}. In the notation of their paper, for a walk
matrix $W_{r\leftarrow\ell}$ they define the interval seminorm
\begin{equation}
    \|u\|_{\ell\leadsto r}^2
    :=u^{\top}
      \bigl(I-W_{r\leftarrow\ell}^{\top}
               W_{r\leftarrow\ell}\bigr)u
    =\|u\|_2^2-\|W_{r\leftarrow\ell}u\|_2^2.
    \label{eq:chen-interval-defect}
\end{equation}
Under our transpose convention,
$W_{r\leftarrow\ell}=T_{[\ell,r)}^{\top}$, and hence
\[
    \|u\|_{\ell\leadsto r}^2
    =D_{T_{[\ell,r)}}^-(u)^2.
\]
Their seminorm, which they call the $\mathsf{F}$-seminorm---not to be confused
with our notation $F$ for the forward weight---aggregates these
quantities over the dyadic intervals of the layered program:
\begin{equation}
    \|x\|_{\mathsf F}^2
    =\sum_{k=1}^{\log n}\sum_{j\in U_k}
      \bigl\|x^{[j]}\bigr\|_{j\leadsto j+2^k}^2,
    \label{eq:chen-F-seminorm}
\end{equation}
where $U_k=\{0,2^k,2\cdot 2^k,\ldots,n\}$, with the convention adopted
in their paper for the terminal layer. In Section~9 of their paper, the
same defect quantities appear in the definition of singular-value
approximation:
\begin{equation}
    |y^{\top}(\widetilde W-W)x|
    \leq \frac{\lambda}{4}
    \left(D_W^+(x)^2+D_W^-(y)^2\right).
    \label{eq:chen-sv}
\end{equation}

Thus, although the two seminorms are not identical, they are based on
the same program-dependent notion of Euclidean dissipation. The
seminorm of \cite{chen2023weighted} is a one-sided, multiscale
$\ell_2$ aggregation over the entire layered space. By contrast, our
seminorm is associated with a single interval and measures a two-sided
bilinear error using the forward and backward $\ell_1$ weights.

\section{A Refined Analysis and a Convolution Form of the Error}\label{sec:refined}
In this section we present a more refined analysis of \cref{prop:error-bound}. The attentive reader may have noticed two sources of slack in the proof. The first is the mixture of $\ell_1$ and $\ell_2$ bounds. Indeed, the BRRY weight bound is stated in terms of the $\ell_1$ norm, which in particular implies the corresponding $\ell_2$ bound. By contrast, the proof of \cref{prop:error-bound} uses spectral estimates throughout, except for the estimate on $\delta$, which is given in terms of the $\ell_1$ norm. The second source of slack is that, in the induction step from \cref{eq:before} to \cref{eq:after}, the term $BC$ is in fact redundant. It turns out that both sources of slack can be removed, leading to the more refined analysis presented in this section. While this improvement does not affect our main result, \cref{thm:informal-main}, we believe that this perspective is worth recording with an eye toward future work.

Before turning to the formal proof, we illustrate the idea behind the refined
analysis. In \cref{prop:error-bound}, the error term
\[
        p^\top \widetilde T_L(\widetilde T_R-T_R)q
\]
was bounded by applying the induction hypothesis to the interval \(R\), with
\(\widetilde T_L^\top p\) as the left test vector. This required comparing \(F_R(\widetilde T_L^\top p)\) with \(F_R(T_L^\top p)\), the task we turned to after \cref{eq:milo}.
 It is better to compare
\(T_L^\top p\) with its approximation \(\widetilde T_L^\top p\) before passing them through the
functional \(F_R\). This can be done by introducing one additional hybrid, in
addition to the three summands appearing in
\cref{summand:expander-prod}, \cref{summand:right-error}, and
\cref{summand:left-error}:
\[
        \widetilde T_L(\widetilde T_R-T_R)
        =
        T_L(\widetilde T_R-T_R)
        +
        (\widetilde T_L-T_L)(\widetilde T_R-T_R).
\]
This introduces the second-order error term $(\widetilde T_L-T_L)(\widetilde T_R-T_R)$, but this term can be absorbed using
Cauchy--Schwarz together with the induction hypothesis. With this in mind, we define the more accurate potential function as
follows.

\begin{definition}[Recursive forward--backward potential]
Let \(I\) be a dyadic interval.  We define a quantity
\(\Phi_I(p,q)\) recursively as follows.  If \(|I|=1\), set
\[
        \Phi_I(p,q)=0.
\]
If \(I=L\circ R\), where \(L\) and \(R\) are the left and right halves of
\(I\), set
\[
        \Phi_I(p,q)
        =
        F_L(p)B_R(q)
        +
        \Phi_L(p,T_Rq)
        +
        \Phi_R(T_L^\top p,q).
\]
\end{definition}

The potential \(\Phi_I(p,q)\) is always upper bounded by the product
\(F_I(p)B_I(q)\), which is the quantity used in \cref{prop:error-bound}.

\begin{claim}
\label{lem:phi-dominated-by-fb}
For every dyadic interval \(I\) and every \(p,q\in\mathbb R^w\),
\[
        \Phi_I(p,q)\le F_I(p)B_I(q).
\]
\end{claim}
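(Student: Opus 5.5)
We argue by induction on the size of the dyadic interval $I$. For $|I|=1$ we have $\Phi_I(p,q)=0$, while $F_I(p)B_I(q)\ge 0$ because both weights are sums of $\ell_1$ norms. So the base case is immediate.

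For the inductive step, write $I=L\circ R$ and use the same abbreviations as in the proof of \cref{prop:error-bound}:
\[
A=F_L(p),\qquad B=F_R(T_L^\top p),\qquad C=B_L(T_Rq),\qquad D=B_R(q).
\]
By the concatenation identities \cref{eq:forwards concat} and \cref{eq:backward concat}, $F_I(p)=A+B$ and $B_I(q)=C+D$. Apply the induction hypothesis to the two recursive terms in the definition of $\Phi_I$:
\[
\Phi_L(p,T_Rq)\le F_L(p)\,B_L(T_Rq)=AC,
\qquad
\Phi_R(T_L^\top p,q)\le F_R(T_L^\top p)\,B_R(q)=BD.
\]
Together with the first term $F_L(p)B_R(q)=AD$, this gives
\[
\Phi_I(p,q)\le AD+AC+BD\le AD+AC+BD+BC=(A+B)(C+D)=F_I(p)B_I(q).
\]
The final inequality uses only $BC\ge 0$, which holds because all four quantities are nonnegative.

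There is no real obstacle here. The only points to check are that the vectors fed into the induction hypothesis, namely $T_Rq$ and $T_L^\top p$, are exactly those appearing in the concatenation identities, and that all weights are nonnegative. The argument also shows that $\Phi_I$ drops precisely the redundant cross term $BC$ mentioned at the start of \cref{sec:refined}.
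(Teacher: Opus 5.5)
Your proof is correct and follows the paper's own argument: induction on $|I|$ with the abbreviations $A,B,C,D$, the induction hypothesis giving $\Phi_L(p,T_Rq)\le AC$ and $\Phi_R(T_L^\top p,q)\le BD$, and then $AD+AC+BD\le (A+B)(C+D)$. Your remark that the last step uses only $BC\ge 0$, i.e.\ nonnegativity of the weights, makes explicit a point the paper leaves implicit.
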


\begin{proof}
The claim is trivial when \(|I|=1\).  Suppose \(I=L\circ R\).  Define the parameters $A,B,C$ and $D$ as in the proof of \cref{prop:error-bound},
\[
        A=F_L(p),
        \qquad
        B=F_R(T_L^\top p),
        \qquad
        C=B_L(T_Rq),
        \qquad
        D=B_R(q).
\]
Then
\[
        F_I(p)=A+B,
        \qquad
        B_I(q)=C+D.
\]
By the induction hypothesis,
\[
        \Phi_L(p,T_Rq)\le AC,
        \qquad
        \Phi_R(T_L^\top p,q)\le BD.
\]
Therefore,
\[
\begin{aligned}
        \Phi_I(p,q)
        &=
        AD+\Phi_L(p,T_Rq)+\Phi_R(T_L^\top p,q)  \\
        &\le
        AD+AC+BD                                \\
        &\le
        (A+B)(C+D)                               \\
        &=
        F_I(p)B_I(q).
\end{aligned}
\]
This completes the induction.
\end{proof}

We will also use the following notation.  For a dyadic interval \(J\), set
\[
        \kappa_F(J)
        =
        \sup_{\|u\|_2=1} F_J(u),
        \qquad
        \kappa_B(J)
        =
        \sup_{\|v\|_2=1} B_J(v).
\]
Let
\[
        K
        =
        \sup_{J=L\circ R}
        \kappa_B(L)\kappa_F(R),
\]
where the supremum is over all dyadic intervals \(J\) appearing in the
recursion.

\begin{proposition}[Refined forward--backward induction]
\label{prop:refined-forward-backward}
Let \(c\ge 2\), and assume that
\[
        1+c^2\lambda K\le c.
\]
Then, for every dyadic interval \(I\) and every \(p,q\in\mathbb R^w\),
\[
        \left|
        p^\top(\widetilde T_I-T_I)q
        \right|
        \le
        c\lambda \Phi_I(p,q).
\]
\end{proposition}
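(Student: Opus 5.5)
The plan is to induct on the level $i$ of the dyadic interval $I$, with the decomposition of \cref{prop:error-bound} refined by one extra hybrid. When $|I|=1$ we have $\wt{T}_I=T_I$ and $\Phi_I=0$, so the claim is trivial. For $I=L\circ R$, I will use the exact four-term identity
\[
    \wt{T}_I-T_I
    =
    (\wt{T}_I-\wt{T}_L\wt{T}_R)
    +T_L(\wt{T}_R-T_R)
    +(\wt{T}_L-T_L)T_R
    +(\wt{T}_L-T_L)(\wt{T}_R-T_R),
\]
which follows by expanding $\wt{T}_L\wt{T}_R-T_LT_R$. I then bound each term, sandwiched between $p^\top$ and $q$, against the matching piece of $\Phi_I(p,q)$.

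The first three terms are direct.
\begin{itemize}
\item The expander term is at most $\lambda F_L(p)B_R(q)$ by \cref{prop:local-expander-product-error}.
\item For the second term, write it as $(T_L^\top p)^\top(\wt{T}_R-T_R)q$. The induction hypothesis on $R$ bounds it by $c\lambda\,\Phi_R(T_L^\top p,q)$.
\item For the third term, write it as $p^\top(\wt{T}_L-T_L)(T_Rq)$. The induction hypothesis on $L$ bounds it by $c\lambda\,\Phi_L(p,T_Rq)$.
\end{itemize}
Because the true operator $T_L$, not $\wt{T}_L$, now multiplies the right-hand error, no comparison of $F_R(\wt{T}_L^\top p)$ with $F_R(T_L^\top p)$ is needed. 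The previous $\ell_1$ estimate on $\delta$ is replaced by the second-order term.

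The main step is the cross term $\langle(\wt{T}_L-T_L)^\top p,\,(\wt{T}_R-T_R)q\rangle$. I will apply Cauchy--Schwarz and bound each $\ell_2$ norm by duality.

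For the left factor, $\norm{(\wt{T}_L-T_L)^\top p}_2=\sup_{\norm{u}_2=1}p^\top(\wt{T}_L-T_L)u$. For each unit $u$ I apply the induction hypothesis on $L$ with test vectors $p,u$, which is valid since it holds for all real vectors. Then \cref{lem:phi-dominated-by-fb} gives
\[
    p^\top(\wt{T}_L-T_L)u\le c\lambda\,\Phi_L(p,u)\le c\lambda\,F_L(p)B_L(u)\le c\lambda\,F_L(p)\kappa_B(L).
\]

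Symmetrically, $\norm{(\wt{T}_R-T_R)q}_2=\sup_{\norm{v}_2=1}v^\top(\wt{T}_R-T_R)q$, and the same argument on $R$ gives
\[
    \norm{(\wt{T}_R-T_R)q}_2\le c\lambda\,\kappa_F(R)B_R(q).
\]

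Multiplying the two bounds, the cross term is at most $c^2\lambda^2\kappa_B(L)\kappa_F(R)F_L(p)B_R(q)\le c^2\lambda^2K\,F_L(p)B_R(q)$.

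Summing the four bounds gives
\[
    \left|p^\top(\wt{T}_I-T_I)q\right|
    \le
    \lambda(1+c^2\lambda K)F_L(p)B_R(q)
    +c\lambda\,\Phi_L(p,T_Rq)
    +c\lambda\,\Phi_R(T_L^\top p,q).
\]
The hypothesis $1+c^2\lambda K\le c$ turns the first coefficient into $c\lambda$, and the right-hand side becomes exactly $c\lambda\,\Phi_I(p,q)$.

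The only delicate point is the cross term. One must check that the duality step is legitimate: the induction hypothesis has to hold for arbitrary, possibly non-stochastic, unit vectors $u,v$. One must also use \cref{lem:phi-dominated-by-fb} to pass from $\Phi$ to the product $F\cdot B$ before taking suprema, since $\kappa_B$ and $\kappa_F$ are defined through $B$ and $F$ rather than through $\Phi$.
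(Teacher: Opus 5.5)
Your proposal is correct and follows the paper's proof essentially step for step. It uses the same four-term decomposition with the second-order hybrid $(\widetilde T_L-T_L)(\widetilde T_R-T_R)$, and the same Cauchy--Schwarz-plus-duality bound on that term via the induction hypothesis and \cref{lem:phi-dominated-by-fb}. It then absorbs the expander and second-order terms into $c\lambda F_L(p)B_R(q)$ using $1+c^2\lambda K\le c$, exactly as the paper does.
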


\begin{proof}
We prove the proposition by induction on \(|I|\).  If \(|I|=1\), then
\(\widetilde T_I=T_I\), so there is nothing to prove. Assume now that \(I=L\circ R\), and that the claim holds for \(L\) and
\(R\).  We decompose
\[
\begin{aligned}
        \Delta_I
        &\deff
        \widetilde T_I-T_I                                                \\
        &=
        (\widetilde T_I-\widetilde T_L\widetilde T_R)
        +
        \Delta_L T_R
        +
        T_L\Delta_R
        +
        \Delta_L\Delta_R .
\end{aligned}
\]
We bound the four terms separately.

First, by \cref{prop:local-expander-product-error},
\[
        \left|
        p^\top
        (\widetilde T_I-\widetilde T_L\widetilde T_R)
        q
        \right|
        \le
        \lambda F_L(p)B_R(q).
\]
Second, by the induction hypothesis applied to the interval \(L\),
\[
\begin{aligned}
        \left|
        p^\top \Delta_L T_Rq
        \right|
        &\le
        c\lambda \Phi_L(p,T_Rq).
\end{aligned}
\]
Third, by the induction hypothesis applied to the interval \(R\),
\[
\begin{aligned}
        \left|
        p^\top T_L\Delta_Rq
        \right|
        =
        \left|
        (T_L^\top p)^\top \Delta_Rq
        \right|                                      
        \le
        c\lambda \Phi_R(T_L^\top p,q).
\end{aligned}
\]
It remains to bound the second-order term
\(
        p^\top \Delta_L\Delta_Rq.
\)
By Cauchy--Schwarz,
\[
        \left|
        p^\top \Delta_L\Delta_Rq
        \right|
        \le
        \|\Delta_L^\top p\|_2 \cdot \|\Delta_Rq\|_2 .
\]
We first bound \(\|\Delta_L^\top p\|_2\).  By duality,
\[
\begin{aligned}
        \|\Delta_L^\top p\|_2
        &=
        \sup_{\|r\|_2=1}
        \left|
        p^\top\Delta_L r
        \right|                                                     \\
        &\le
        c\lambda
        \sup_{\|r\|_2=1}
        \Phi_L(p,r)                                                 \\
        &\le
        c\lambda
        \sup_{\|r\|_2=1}
        F_L(p)B_L(r)                                                \\
        &\le
        c\lambda F_L(p)\kappa_B(L),
\end{aligned}
\]
where we used \cref{lem:phi-dominated-by-fb} in the third line.
Similarly,
\[
\begin{aligned}
        \|\Delta_Rq\|_2
        &\le
        c\lambda \kappa_F(R)B_R(q).
\end{aligned}
\]
Therefore,
\[
        \left|
        p^\top \Delta_L\Delta_Rq
        \right|
        \le
        c^2\lambda^2
        \kappa_B(L)\kappa_F(R)
        F_L(p)B_R(q).
\]
By the definition of \(K\),
\(
        \kappa_B(L)\kappa_F(R)\le K.
\)
Hence the expander error and the second-order error together are at
most
\[
        \bigl(\lambda+c^2\lambda^2K\bigr)F_L(p)B_R(q)
        \le
        c\lambda F_L(p)B_R(q),
\]
using the assumption \(1+c^2\lambda K\le c\). Combining the four bounds gives
\[
\begin{aligned}
        \left|
        p^\top \Delta_Iq
        \right|
        &\le
        c\lambda F_L(p)B_R(q)
        +
        c\lambda \Phi_L(p,T_Rq)
        +
        c\lambda \Phi_R(T_L^\top p,q)        \\
        &=
        c\lambda \Phi_I(p,q).
\end{aligned}
\]
This completes the induction.
\end{proof}

\begin{remark}
\normalfont
By \cref{lemma:weight-bound} and Cauchy--Schwarz, for every interval \(J\) and every
\(v\in\mathbb R^w\),
\[
\begin{aligned}
        B_J(v)
        &\le
        2\sum_{1\le i<j\le w}|v_i-v_j|                                      \\
        &\le
        2\binom{w}{2}^{1/2}
        \left(
        \sum_{1\le i<j\le w}(v_i-v_j)^2
        \right)^{1/2}                                                        \\
        &=
        2\binom{w}{2}^{1/2}
        \left(
        w\|v\|_2^2-\langle v,\mathbf 1\rangle^2
        \right)^{1/2}                                                        \\
        &\le
        \sqrt 2\, w^{3/2}\|v\|_2.
\end{aligned}
\]
The same bound holds for \(F_J(v)\).  Thus
\[
        \kappa_F(J),\kappa_B(J)\le \sqrt 2\,w^{3/2},
\]
and consequently
\(
        K\le 2w^3.
\)
In particular, the assumption
\(
        \lambda\le \frac{1}{8c^2w^3}
\)
from \cref{prop:error-bound} is more than sufficient for
\(1+c^2\lambda K\le c\), whenever \(c\ge 2\).
\end{remark}

The refined induction immediately implies \cref{prop:error-bound}, since
\cref{lem:phi-dominated-by-fb} gives
\(
        \Phi_I(p,q)\le F_I(p)B_I(q).
\)

\subsection{A Convolution Perspective}
When expanding the recursion underlying the definition of \(\Phi_I\), we see it is exactly an ordered
convolution of the local forward and backward weights, as discussed in this section.

Let \(I=[a,b)\), let \(p_a=p\), and let \(q_b=q\).  Define the propagated
vectors
\[
        p_{t+1}=T_t^\top p_t
        \qquad\text{for }t=a,a+1,\ldots,b-1,
\]
and
\[
        q_t=T_tq_{t+1}
        \qquad\text{for }t=b-1,b-2,\ldots,a.
\]
For \(t\in I\), define the local forward and backward weights
\[
        f_t
        =
        \left\|
        p_t^\top(P_{t,0}-P_{t,1})
        \right\|_1,
        \qquad
        b_t
        =
        \left\|
        (P_{t,0}-P_{t,1})q_{t+1}
        \right\|_1.
\]
Thus
\[
        F_I(p)=\sum_{t=a}^{b-1}f_t,
        \qquad
        B_I(q)=\sum_{t=a}^{b-1}b_t.
\]

\begin{lemma}[Convolution identity]
\label{lem:phi-convolution}
For every dyadic interval \(I=[a,b)\) and every \(p,q\in\mathbb R^w\),
\[
        \Phi_I(p,q)
        =
        \sum_{a\le t<u<b} f_t b_u.
\]
\end{lemma}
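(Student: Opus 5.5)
We argue by induction on $|I|$, following the recursive definition of $\Phi_I$. The base case $|I|=1$ is immediate: the right-hand side sums over pairs $a\le t<u<b=a+1$, and there are none, so it equals $0=\Phi_I(p,q)$.

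For the inductive step, write $I=L\circ R$ with $L=[a,m)$ and $R=[m,b)$, and let $f_t,b_u$ denote the local weights defined with respect to $I$, $p$ and $q$. The key observation is that the propagated vectors restrict consistently to subintervals.

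On $L$, the recursive term is $\Phi_L(p,T_Rq)$. Its forward propagation starts at $p_a=p$, so it reproduces $p_a,\ldots,p_m$. Its backward propagation starts at $q_m=T_Rq$, which is exactly the backward propagation of $q$ along $I$ evaluated at layer $m$. Hence it reproduces $q_a,\ldots,q_m$. Consequently the local weights of $L$ in this term coincide with $f_t,b_t$ for $t\in L$.

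On $R$, the recursive term is $\Phi_R(T_L^\top p,q)$. Here $T_L^\top p=p_m$ and the backward propagation starts at $q_b=q$, so the local weights coincide with $f_t,b_t$ for $t\in R$.

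Applying the induction hypothesis to both terms gives
\[
\Phi_L(p,T_Rq)=\sum_{a\le t<u<m}f_tb_u,
\qquad
\Phi_R(T_L^\top p,q)=\sum_{m\le t<u<b}f_tb_u .
\]

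It remains to handle the cross term $F_L(p)B_R(q)$. By the same consistency, $F_L(p)=\sum_{t<m}f_t$. Likewise $B_R(q)=\sum_{u\ge m}b_u$, since the backward propagation of $q$ along $R$ agrees with that along $I$. Therefore
\[
F_L(p)B_R(q)=\sum_{a\le t<m\le u<b}f_tb_u,
\]
which covers exactly the pairs $t<u$ with $t\in L$ and $u\in R$.

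Every pair $t<u$ in $I$ has either both indices in $L$, both in $R$, or $t\in L$ and $u\in R$. These three cases are disjoint and exhaustive, so summing the three pieces gives $\sum_{a\le t<u<b}f_tb_u$, which completes the induction.

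The only point needing care is the index bookkeeping in the consistency step. One must verify that $B_R$ and $B_L$ are evaluated at the right propagated vector (for instance, $b_t$ uses $q_{t+1}$) and that the recursive definitions of $\Phi_L$ and $\Phi_R$ are started at the correct endpoints. This follows directly from the definitions $p_{t+1}=T_t^\top p_t$, $q_t=T_tq_{t+1}$ and the identities $T_L^\top p=p_m$, $T_Rq=q_m$. Beyond this, I anticipate no real obstacle.
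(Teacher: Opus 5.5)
Your proposal is correct and follows essentially the same route as the paper's proof. Both argue by induction on $|I|$, identify the cross term $F_L(p)B_R(q)$ with the pairs straddling $m$, and use $T_Rq=q_m$ and $T_L^\top p=p_m$ to show that the propagations inside $L$ and $R$ are restrictions of the global ones, so the induction hypothesis yields the within-$L$ and within-$R$ sums.
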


\begin{proof}
We prove the claim by induction on \(|I|\).  If \(|I|=1\), then both sides
are zero.

Assume \(I=L\circ R\), where \(L=[a,m)\) and \(R=[m,b)\).  By definition,
\[
        \Phi_I(p,q)
        =
        F_L(p)B_R(q)
        +
        \Phi_L(p,T_Rq)
        +
        \Phi_R(T_L^\top p,q).
\]
The first term is
\[
        F_L(p)B_R(q)
        =
        \left(\sum_{t=a}^{m-1}f_t\right)
        \left(\sum_{u=m}^{b-1}b_u\right)
        =
        \sum_{\substack{a\le t<m\\ m\le u<b}} f_t b_u.
\]
This accounts for all ordered pairs \((t,u)\) separated by the split.

Next, since \(T_Rq=q_m\), the backward propagation inside \(L\) with
terminal vector \(T_Rq\) is exactly the restriction of the above backward
propagation to \(L\).  Therefore, by the induction hypothesis,
\[
        \Phi_L(p,T_Rq)
        =
        \sum_{a\le t<u<m}f_t b_u.
\]
Similarly, since \(T_L^\top p=p_m\), the forward propagation inside \(R\)
with initial vector \(T_L^\top p\) is exactly the restriction of the above
forward propagation to \(R\).  Hence
\[
        \Phi_R(T_L^\top p,q)
        =
        \sum_{m\le t<u<b}f_t b_u.
\]
Adding the three sums gives
\[
        \Phi_I(p,q)
        =
        \sum_{a\le t<u<b} f_t b_u,
\]
as claimed.
\end{proof}

\bibliographystyle{alpha}
\bibliography{bib}

\appendix

\section{Missing Proofs}\label{sec:missing proofs}

For completeness, in this section we include some missing proofs.
We begin with \cref{lemma:expander-mixing-vectors}, which, as mentioned, follows from the standard expander-mixing lemma.

\begin{proof}[Proof of \cref{lemma:expander-mixing-vectors}]
    Write $\alpha = (\alpha_1,\ldots, \alpha_k), \beta = (\beta_1,\ldots,\beta_k) : V \to \mathbb{R}^k$. By the standard expander mixing lemma, for all $i$ we have
    \[
        \left | \mathbb{E}_{(x,y) \in E}[ \alpha_i(x) \beta_i(y)]\right | \leq \lambda \left (\mathbb{E}_{x \in V} [\alpha_i(x)^2] \right )^{1/2} \left (\mathbb{E}_{y \in V} [\beta_i(y)^2 ] \right)^{1/2}.
    \]
    Thus,
    \begin{align*}
        \left | \mathbb{E}_{(x,y) \in E} [\langle \alpha(x), \beta(y) \rangle_{\mathbb{R}^k}] \right | 
        &= \left | \mathbb{E}_{(x,y) \in E} \left[ \sum_{i=1}^k \alpha_i(x) \beta_i(y) \right] \right | \\
        &\leq \sum_{i=1}^k \left | \mathbb{E}_{(x,y) \in E} [ \alpha_i(x) \beta_i(y) ] \right | \\
        &\leq \lambda \sum_{i=1}^k \left (\mathbb{E}_{x \in V} \left [ \alpha_i(x)^2 \right ] \right )^{1/2} \left (  \mathbb{E}_{y \in V} \left [ \beta_i(y)^2 \right ] \right )^{1/2}.
    \end{align*}
    By the Cauchy--Schwarz inequality, the above is bounded by
    \[
        \lambda \left ( \EE_{x \in V} \norm{\alpha(x)}_2^2) \right )^{1/2} \left ( \EE_{y \in V} \norm{\beta(y)}_2^2) \right )^{1/2},
    \]
    as needed.
\end{proof}

We include the proof of \cref{lemma:weight-bound}. As noted above, the proof follows directly from Lemma~5 of \cite{BRRY}.

\begin{proof}[Proof of \cref{lemma:weight-bound}]
    We first prove the bound for $B_I(q)$. Consider the subprogram induced by
    the interval $I=[a,b)$, and label the vertices in layer $b$ by the entries
    of $q=q_b$. Extend these labels backwards by setting
    \[
        q_t
        =
        T_t q_{t+1}
        =
        \frac{P_{t,0}+P_{t,1}}{2}q_{t+1}
    \]
    for $t=b-1,b-2,\ldots,a$. Thus the label of each vertex is the average of
    the labels of its two children. Since the branching program is a permutation
    branching program, this subprogram is regular.

    The weight of this evaluation program, in the sense of \cite{BRRY}, is
    exactly
    \[
        \sum_{t=a}^{b-1}
        \left(
            \norm{P_{t,0}q_{t+1}-q_t}_1
            +
            \norm{P_{t,1}q_{t+1}-q_t}_1
        \right)
        =
        B_I(q).
    \]
    Lemma~5 of \cite{BRRY} therefore gives
    \[
        B_I(q)
        \leq
        2\sum_{1\leq i<j\leq w}|q_i-q_j|.
    \]
    Although Lemma~5 of \cite{BRRY} is stated for labels in $[0,1]$, this
    entails the displayed bound for arbitrary $q\in\mathbb{R}^w$ by an affine
    rescaling. Indeed, if $q$ is not constant, set
    \[
        \widehat q
        =
        \frac{q-\alpha\mathbbm{1}}{\beta-\alpha},
        \qquad
        \alpha=\min_i q_i,\quad \beta=\max_i q_i.
    \]
    Since $T_t\mathbbm{1}=\mathbbm{1}$ for every $t$, the corresponding
    propagated labels satisfy
    \[
        \widehat q_t
        =
        \frac{q_t-\alpha\mathbbm{1}}{\beta-\alpha}.
    \]
    Hence all edge weights, and also all pairwise differences in the terminal
    layer, are scaled by the same factor $\beta-\alpha$. Applying the
    $[0,1]$ version to $\widehat q$ and scaling back gives the desired bound.
    If $q$ is constant, then both sides are zero.

    The bound for $F_I(p)$ follows by applying the same argument to the
    reversed permutation branching program on the interval $I$. In the reversed
    program, the transition matrices are $P_{t,0}^\top$ and $P_{t,1}^\top$.
    Since the matrices $P_{t,0}$ and $P_{t,1}$ are permutation matrices, their
    transposes are their inverses and are again permutation matrices. The
    backward propagation in the reversed program is precisely
    \[
        p_{t+1}
        =
        T_t^\top p_t
        =
        \frac{P_{t,0}^\top+P_{t,1}^\top}{2}p_t,
    \]
    which is the forward propagation in the original program. Moreover, the
    corresponding edge weight at layer $t$ is
    \[
        \norm{P_{t,0}^\top p_t-p_{t+1}}_1
        +
        \norm{P_{t,1}^\top p_t-p_{t+1}}_1
        =
        F_t(p_t).
    \]
    Therefore Lemma~5 of \cite{BRRY}, applied to the reversed program, gives
    \[
        F_I(p)
        \leq
        2\sum_{1\leq i<j\leq w}|p_i-p_j|.
    \]

    Finally, for any $v\in\mathbb{R}^w$,
    \[
        \sum_{1\leq i<j\leq w}|v_i-v_j|
        \leq
        \sum_{1\leq i<j\leq w}(|v_i|+|v_j|)
        =
        (w-1)\norm{v}_1
        \leq
        w\norm{v}_1.
    \]
    Applying this with $v=q$ and $v=p$ gives the two remaining inequalities.
\end{proof}

\end{document}